\crefname{algocf}{Alg.}{Algs.}
\Crefname{algocf}{Algorithm}{Algorithms}
\newcommand{\alert}[1]{\textcolor{red}{#1}}
\newcommand{\true}{\texttt{true}}
\newcommand{\false}{\texttt{false}}
\newcommand{\OPT}{\mathrm{OPT}}
\DeclareMathOperator*{\argmax}{arg\,max}
\newcommand{\cS}{\mathcal{S}}
\renewcommand{\mid}{:}
\newcolumntype{C}{>{\centering\arraybackslash}X}
\NewExpandableDocumentCommand\mcc{O{1}m}{\multicolumn{#1}{c}{#2}}
\newcolumntype{S}{>{\color{blue}\scriptsize}c}
\newcolumntype{T}{>{\color{blue}\ssmall}c}
\newcommand{\noline}[1]{\multicolumn{1}{c}{\textcolor{blue}{\scriptsize #1}}}
\newcolumntype{Q}{@{}>{\centering\arraybackslash}p{2em}@{}}
\theoremstyle{thmstyleone}%
\newtheorem{theorem}{Theorem}
\theoremstyle{thmstyletwo}%
\newtheorem{example}{Example}%
\theoremstyle{thmstylethree}%
\newtheorem{lemma}{Lemma}
\newtheorem{corollary}{Corollary}
\begin{document}
\title[Article Title]{Resource Allocation under the Latin Square Constraint}

\author[1]{\fnm{Yasushi} \sur{Kawase}}

\author[2]{\fnm{Bodhayan} \sur{Roy}}

\author[2]{\fnm{Mohammad Azharuddin} \sur{Sanpui}}

\affil[1]{
\orgname{The University of Tokyo}, 
\orgaddress{
\state{Tokyo}, \country{Japan}}}

\affil[2]{
\orgname{Indian Institute of Technology Kharagpur}, %
\orgaddress{
\state{West Bengal}, \country{India}}}

\abstract{A Latin square is an $n \times n$ matrix filled with $n$ distinct symbols, each of which appears exactly once in each row and exactly once in each column.
We introduce a problem of allocating $n$ indivisible items among $n$ agents over $n$ rounds while satisfying the Latin square constraint. This constraint ensures that each agent receives no more than one item per round and receives each item at most once. Each agent has an additive valuation on the item--round pairs. 
Real-world applications like scheduling, resource management, and experimental design require the Latin square constraint to satisfy fairness or balancedness in allocation.
Our goal is to find a partial or complete allocation that maximizes the sum of the agents' valuations (utilitarian social welfare) or the minimum of the agents' valuations (egalitarian social welfare). 
For the problem of maximizing utilitarian social welfare, we prove NP-hardness even when the valuations are binary additive. 
We then provide $(1-1/e)$ and $(1-1/e)/4$-approximation algorithms for partial and complete settings, respectively. Additionally, we present fixed-parameter tractable (FPT) algorithms with respect to the order of Latin square and the optimum value for both partial and complete settings. 
For the problem of maximizing egalitarian social welfare, we establish that deciding whether the optimum value is at most $1$ or at least $2$ is NP-hard for both the partial and complete settings, even when the valuations are binary. 
Furthermore, we demonstrate that checking the existence of a complete allocation that satisfies each of envy-free, proportional, equitable, envy-free up to any good, proportional up to any good, or equitable up to any good is NP-hard, even when the valuations are identical.}

\keywords{Latin square, Utilitarian social welfare, Egalitarian social welfare, Approximation algorithm, Parameterized algorithm, NP-hardness}



\maketitle

\section{Introduction}\label{sec1}
The fair division of indivisible resources constitutes a significant and complex challenge at the intersection of economics, mathematics, and computer science, bearing substantial implications for both theoretical and practical applications \cite{Bouveret2016FairAO,lipton2004approximately,alkan1991fair,brams1996fair,klamler2010fair,Cake,Fairdivisionandcollectivewelfare}. In contrast to divisible resources---such as land, capital, or commodities---that can be divided according to the preferences of agents \cite{aziz2016discrete,aziz2020bounded,goldberg2020contiguous,segal2017fair}, indivisible resources cannot be divided without substantially reducing their utility or worth \cite{chevaleyre2017distributed,segal2019democratic,aziz2022fair,aziz2023possible}. Examples encompass residences, automobiles, or artworks, wherever fractional ownership is either impractical or undesirable.

Allocating indivisible resources fairly and efficiently is crucial in many practical scenarios, such as scheduling sightseeing for multiple groups visiting various locations or assigning shifts to medical professionals.  

In this paper, we introduce the problem of allocating $n$ indivisible items among $n$ agents over $n$ rounds, ensuring that each agent receives each item once.
This problem can be regarded as an allocation problem under a \emph{Latin square} constraint.

The notion of indivisible item allocation constrained by a Latin square provides a systematic and effective method for distributing tasks or resources in many real-world contexts. A Latin square is a mathematical configuration in which each element occurs exactly once in each row and each column of a grid, thus preventing any repetition within the same context (see Figure~\ref{fig:exLS}). 
Various domains, such as job scheduling, school timetabling, resource allocation optimization in computing systems, and event seating arrangement structuring, utilize this constraint. See \Cref{subsec:application} for more details on examples of applications.

A \emph{complete Latin square} of order $n$ is an $n\times n$ array filled with $n$ different symbols, each occurring exactly once in each row and exactly once in each column. 
A \emph{partial Latin square} is the case where some cells may be empty.
Figure~\ref{fig:exLS} illustrates two examples of complete Latin squares and two examples of partial Latin squares.

\begin{figure}[htbp]
\renewcommand{\arraystretch}{1.3}
\begin{tabular}{|Q|Q|Q|Q|}
\hline
A & B & C & D \\\hline
B & C & D & A \\\hline
C & D & A & B \\\hline
D & A & B & C \\\hline
\end{tabular}
\quad
\begin{tabular}{|Q|Q|Q|Q|}
\hline
D & A & C & B \\\hline
C & B & D & A \\\hline
B & C & A & D \\\hline
A & D & B & C \\\hline
\end{tabular}
\quad
\begin{tabular}{|Q|Q|Q|Q|}
\hline
A & B &   &   \\\hline
B & A &   &   \\\hline
  &   & D & C \\\hline
  &   & C & D \\
\hline
\end{tabular}
\quad
\begin{tabular}{|Q|Q|Q|Q|}
\hline
A & B & C &   \\\hline
  &   &   & D \\\hline
  &   &   &   \\\hline
  &   &   &   \\
\hline
\end{tabular}
\caption{Examples of complete and partial Latin squares}\label{fig:exLS}
\end{figure}
In our setting, rows correspond to items, columns correspond to rounds, and symbols correspond to agents.
The structure of the Latin square ensures that no item is allocated to multiple agents in each round, 
each agent receives at most one item per round, and no agent receives the same item more than once.

Suppose that each agent $i$ has a valuation $v_{ijk}$ for each pair of item $j$ and round $k$. The utility of agent $i$ is defined as the sum of the valuations that they receive.
We investigate the computational complexities of finding a partial or complete allocation under the Latin square constraint that maximizes social welfare. We call this problem the \emph{Latin square allocation (LSA)} problem.
As the measure of social welfare, we employ two settings: utilitarian social welfare and egalitarian social welfare.
Utilitarian social welfare is defined as the sum of the utilities of the agents, while egalitarian social welfare is defined as the minimum of the utilities of the agents.
\subsection{Related Work}
Latin squares have been the subject of various studies in algebra and combinatorics and have applications in fields such as mathematical puzzles, coding theory, and experimental design~\cite{keedwell2015latin,zhou1994disk,bao2004mals,laywine1998discrete}. 
The study of Latin squares has a rich history and has also been studied from computational aspects. 
One of the fundamental problems in this area is the completion of partial Latin squares.
Colbourn~\cite{colbourn1984complexity} proved NP-hardness of this problem.
Kumar et al.~\cite{kumar1999approximating} introduced a maximization version of completing a partial Latin square. 

Finding an allocation that maximizes utilitarian social welfare or egalitarian social welfare under a constraint has been extensively studied in the context of fair and efficient allocation~\cite{Golovin2005,chakrabarty2009allocating,cohler2011optimal,aumann2012computing,kawase2024contiguous,bei2012optimal,kawase2020max,kawase2024minimizing}.
Additionally, the problem of maximizing utilitarian social welfare is well-explored in the context of combinatorial auctions~\cite{Bertelsen2005,CG2010,Vondrak2008,KLMM2008,AGT}.
Furthermore, maximizing egalitarian social welfare is also called \emph{max-min fair}, and a special case is extensively studied under the name of the \emph{Santa Claus} problem~\cite{BansalS2006,BezakovaD2005,AS2010,Haeupler2011}.


Resource allocation under the Latin square constraint can be seen as a \emph{multi-layered cake cutting} problem. 
The multi-layered cake cutting problem, introduced by Hosseini et al.~cite{Hosseini2020FairDO}, involves allocating layers of a cake without overlap. Similarly, in our resource allocation problem under the Latin square constraint, each round acts as a layer, ensuring no agent receives more than one item per layer. This parallels the non-overlapping constraint in multi-layered cake cutting. Our study of resource allocation under the Latin square constraint was inspired by this multi-layered cake cutting problem with non-overlapping constraint. While these problems are distinct, they share similarities in layers and non-overlapping constraints. However, no previous results from cake cutting directly impact our findings on resource allocation under the Latin square constraint.
Igarashi and Fr{\'e}d{\'e}ric~\cite{igarashi2021envy} further explored the problem and proved the existence of such an allocation in a more general setting. Several other papers have contributed to the understanding of the multi-layered cake cutting problem~\cite{Cloutier2009TwoplayerEM,Nyman2017FairDW,Lebert2013EnvyfreeTM}.

Moreover, resource allocation under the Latin square constraint can also be seen as a \emph{repeated allocation over time}, in which a set of items is allocated to the same set of agents repeatedly over multiple rounds.
Several studies discuss the existence of fair and efficient repeated allocations, as well as the computational complexity involved in finding a desired repeated allocation~\cite{micheel2024fairness,caragiannis2024repeatedly,trabelsi2023resource,elkind2022fairness,igarashi2024repeated}.
However, unlike our setting, each agent can receive the same item more than once.

\subsection{Our Results}
We study the computational complexity of finding partial and complete allocations that satisfy a given efficiency or fairness criterion while adhering to the Latin square constraint.

In \Cref{sec:approx-Umax}, we explore approximation algorithms for the LSA problem of maximizing utilitarian social welfare.
We first provide a $(1-1/e)$-approximation algorithm for the partial LSA problem. Next, we present a $(1-1/e)/4$-approximation algorithm for the complete LSA problem by showing that an $\alpha$-approximate solution for the partial LSA problem can be converted into ${\alpha}/{4}$-approximate solution of the complete LSA problem.

In \Cref{sec:FPT-Umax}, we construct two fixed-parameter tractable (FPT) algorithms for addressing the problem of maximizing utilitarian social welfare with respect to the order of Latin square and the optimum value, respectively, for both partial and complete settings. 
We can also construct an FPT algorithm with respect to the order of Latin square for maximizing egalitarian social welfare.

In \Cref{sec:NP-hard}, we show that the LSA problems for maximizing utilitarian social welfare of the complete and partial settings are both NP-hard, even when the valuations are binary. Additionally, we establish that deciding whether the maximum egalitarian social welfare is at most $1$ or at least $2$ is NP-hard for both the partial and complete settings, even when the valuations are binary additive. 
Furthermore, we construct an approximation-preserving reduction from the max-min fair allocation problem to the LSA problems for maximizing egalitarian social welfare.
The hardness result and reduction suggest that it is unlikely to exist a polynomial-time algorithm with a good approximation ratio or an FPT algorithm with respect to the optimal value for the LSA problems for maximizing egalitarian social welfare.
Furthermore, we demonstrate that checking the existence of a complete allocation that satisfies each of the following conditions is NP-hard, even when the valuations are identical: envy-free (EF), equitable (EQ), proportional (PROP), envy-free up to any good (EFX), equitable up to any good (EQX), and proportional up to any good (PROPX).
For the definitions of these properties, see the last paragraph of \Cref{sec:preliminaries}.

\section{Preliminaries}\label{sec:preliminaries}
\subsection{Model}
For a positive integer $n$, we denote the set $\{1,2,\dots,n\}$ by $[n]$.
In this paper, we address the problem of assigning $n$ items to $n$ agents over $n$ rounds, which we refer to LSA problem. 
Let $N=[n]$ be the set of agents, $M=[n]$ be the set of items, and $R=[n]$ be the set of rounds.
An allocation $A$ is a subset of triplets $N\times M\times R$ where $(i,j,k)\in A$ means that agent $i\in N$ receives item $j\in M$ in round $k\in R$.
An allocation $A$ is \emph{feasible} if 
\begin{itemize}
\item (i) each agent receives at most one item per round (i.e., $|\{(i,j,k')\in A\mid k'\in R\}|\le 1$ for each $i\in N$ and $j\in M$),

\item (ii) no item is allocated to multiple agents in each round (i.e., $|\{(i,j',k)\in A\mid j'\in M\}|\le 1$ for each $i\in N$ and $k\in R$), and

\item (iii) no agent receives the same item more than once (i.e., $|\{(i',j,k)\in A\mid i'\in N\}|\le 1$ for each $j\in M$ and $k\in R$).
\end{itemize}
For a feasible allocation $A$, we write $A(j,k)$ to denote the agent who receives item $j\in M$ in round $k\in R$ if such an agent exists and $A(j,k)=\bot$ if no such agent exists.
We will call a pair of an item and a round a cell.
Additionally, we will denote by $A_i$ the set $\{(j,k)\in M\times R\mid A(j,k)=i\}$.
A feasible allocation where each agent receives each item exactly once (i.e., $|A|=n^2$) is called a \emph{complete} allocation. 
Note that a (complete) allocation corresponds to a (complete) Latin square. 
We will refer to a feasible allocation that is not necessarily complete as \emph{partial}.

Each agent $i\in N$ gets a non-negative integer value of $v_{ijk}\in\mathbb{Z}_{+}$ when receiving item $j\in M$ in round $k\in R$. We say that the valuations are \emph{binary} if $v_{ijk}\in\{0,1\}$ for all $i\in N$, $j\in M$, and $k\in R$. Additionally, we say that the valuations are \emph{identical} if $v_{ijk}=v_{i'jk}$ for all $i,i'\in N$, $j\in M$, and $k\in R$.
For $S\subseteq M\times R$, let $v_i(S)$ denote the value $\sum_{(j,k)\in S} v_{ijk}$.
The \emph{utilitarian social welfare} and the \emph{egalitarian social welfare} of a feasible allocation $A$ is defined as $\sum_{i\in N} v_i(A_i)$ and $\min_{i\in N}v_i(A_i)$, respectively.
We call an allocation $A$ \emph{Umax} and \emph{Emax} if it maximizes utilitarian social welfare and egalitarian social welfare, respectively.

The complete Umax LSA problem is a problem of finding a complete allocation that maximizes utilitarian social welfare. This problem can be represented as the following integer linear programming (ILP):
\begin{align}
\begin{array}{rll}
\max& \sum_{i\in N} \sum_{j\in M} \sum_{k\in R} v_{ijk}x_{ijk}&\\
\text{s.t.}& \sum_{i\in N}x_{ijk}=1&\hspace{-3mm}(\forall j\in M,\,k\in R),\\
& \sum_{j\in M}x_{ijk}=1&\hspace{-3mm}(\forall k\in R,\,i\in N),\\
& \sum_{k\in R}x_{ijk}=1&\hspace{-3mm}(\forall i\in N,\,j\in M),\\
& x_{ijk}\in\{0,1\}     &\hspace{-3mm}(\forall i\in N,\,j\in M,\,k\in R),
\end{array}\label{eq:completeILP}
\end{align}
where $A=\{(i,j,k)\in N\times M\times R\mid x_{ijk}=1\}$ corresponds to a complete allocation.
The partial Umax LSA problem is a problem of finding a partial allocation that maximizes utilitarian social welfare, which can be represented as the following ILP:
\begin{align}
\begin{array}{rll}
\max& \sum_{i\in N} \sum_{j\in M} \sum_{k\in R} v_{ijk}x_{ijk}&\\
\text{s.t.}& \sum_{i\in N}x_{ijk}\le 1&\hspace{-3mm}(\forall j\in M,\,k\in R),\\
& \sum_{j\in M}x_{ijk}\le 1&\hspace{-3mm}(\forall k\in R,\,i\in N),\\
& \sum_{k\in R}x_{ijk}\le 1&\hspace{-3mm}(\forall i\in N,\,j\in M),\\
& x_{ijk}\in\{0,1\}     &\hspace{-3mm}(\forall i\in N,\,j\in M,\,k\in R).
\end{array}\label{eq:partialILP}
\end{align}
We define the partial and complete Emax LSA problems in a similar manner.

It is important to note that the Umax and Emax values of a partial LSA problem are at least as large as those of the corresponding complete LSA problem, respectively. This is because any complete allocation is also a partial allocation. Furthermore, as demonstrated below, the Umax and Emax values of a partial LSA problem can be strictly greater than those of the corresponding complete LSA problem, even for the binary case. 
\begin{example}\label{ex:noexist}
Suppose that $n=2$, $v_{1,1,1}=v_{2,2,2}=1$, and $v_{1,1,2}=v_{1,2,1}=v_{1,2,2}=v_{2,1,1}=v_{2,1,2}=v_{2,2,1}=0$.
Then, the Umax value is $2$, and the Emax value is $1$ for the partial LSA problem, achieved by the partial allocation $\{(1,1,1),(2,2,2)\}$.
On the other hand, there are only two complete allocations: $\{(1,1,1),(1,2,2),(2,1,2),(2,2,1)\}$ and $\{(1,1,2),(1,2,1),(2,1,1),(2,2,2)\}$. Both of these complete allocations have a utilitarian social welfare of $1$ and an egalitarian social welfare of $0$.
Thus, a partial allocation can yield a higher optimum value compared to every complete allocation.
\end{example}


An allocation $A$ is called \emph{envy-free (EF)} if every agent evaluates that their allocated bundle as at least as good as any other agent's bundle, i.e., $v_i(A_i)\ge v_i(A_{i'})$ for any $i,i'\in N$.
Similarly, \emph{envy-free up to one good (EF1)} and \emph{envy-free up to any good (EFX)} are defined as follows:
\begin{itemize}
    \item EF1: $v_i(A_i)\ge v_i(A_{i'}\setminus \{(j,k)\})$ for some $(j,k)\in A_{i'}$ or $A_{i'}=\emptyset$ $(\forall i,i'\in N)$,
    \item EFX: $v_i(A_i)\ge v_i(A_{i'}\setminus\{(j,k)\})$ for any $(j,k)\in A_{i'}$ $(\forall i,i'\in N)$.\footnotemark{}
\end{itemize}
An allocation $A$ is said to be \emph{proportional (PROP)}, \emph{proportional up to one good (PROP1)}, and \emph{proportional up to any good (PROPX)} as follows:
\begin{itemize}
\item  PROP: $v_i(A_i)\ge v_i(M\times R)/n~(\forall i\in N)$, 
\item PROP1: $v_i(A_i)\ge v_i(M\times R)/n-\max_{(j,k)\in (M\times R)\setminus A_i}v_{ijk}~(\forall i\in N)$, 
\item PROPX: $v_i(A_i)\ge v_i(M\times R)/n-\min_{(j,k)\in (M\times R)\setminus A_i}v_{ijk}~(\forall i\in N)$. 
\end{itemize}
An allocation $A$ is said to be \emph{equitable (EQ)}, \emph{equitable up to one good (EQ1)}, and \emph{equitable up to any good (EQX)} as follows:
\begin{itemize}
\item EQ: $v_i(A_i)=v_{i'}(A_{i'})~(\forall i,i'\in N)$, 
\item EQ1: $v_i(A_i)\ge v_{i'}(A_{i'})-\max_{(j,k)\in A_{i'}}v_{i'jk}~(\forall i,i'\in N)$, 
\item EQX: $v_i(A_i)\ge v_{i'}(A_{i'})-\min_{(j,k)\in A_{i'}}v_{i'jk}~(\forall i,i'\in N)$.
\end{itemize}

\footnotetext{This definition of EFX follows a stronger variant introduced by Plaut and Roughgarden~\cite{plaut2020almost}. A weaker variant of EFX can be defined according to the original definition of Caragiannis et al.~\cite{caragiannis2019unreasonable} as follows: $v_i(A_i)\ge v_i(A_{i'}\setminus\{(j,k)\})$ for any $(j,k)\in A_{i'}$ \emph{with $v_{i'jk}>0$}. The non-existence of a complete allocation satisfying this weaker variant of EFX can also be derived from \Cref{ex:noexist}.}

\begin{example}\label{ex:noexist2}
Suppose that $n=2$, $v_{i,1,1}=v_{i,2,2}=1$, and $v_{i,1,2}=v_{i,2,1}=0$ for each $i\in\{1,2\}$.
Then, no complete allocation is EF, EF1, EFX, PROP, PROP1, PROPX, EQ, EQ1, or EQX.
\end{example}

\subsection{Applications}\label{subsec:application}
The Latin square constraint appears frequently in scheduling and resource allocation problems.
We will now discuss some examples.



\subsubsection*{Sightseeing Scheduling} 
In scheduling sightseeing for multiple groups, it is necessary to prevent multiple groups from visiting the same location at the same time to reduce congestion.
Suppose that there are $n$ groups $N$ who want to visit $n$ locations $M$ over $n$ time slots $R$. 
Then, the Latin square constraint guarantees that:
\begin{itemize}
\item Each group visits at most one location per round.
\item No location is visited by multiple groups in each round.
\item No agent visits the same location more than once.
\end{itemize}
Each group $i\in N$ has a valuation $v_{ijk}$ for each pair of location $j\in M$ and time slot $k\in R$. 
In this situation, solving the complete/partial Umax LSA problem leads to an allocation that maximizes utilitarian social welfare.

\subsubsection*{School Timetabling}
Educational institutions frequently require the allocation of classes or teachers to groups of students in a manner that ensures no clash of student groups, teachers, or time slots \cite{pillay2010overview,pillay2014survey,hilton1980reconstruction}. 
Suppose that there are $n$ student groups $N$, $n$ classes (or teachers) $M$, and $n$ time slots $R$. 
The Latin square constraint guarantees that:
\begin{itemize}
    \item Each student group attends at most one class at a time.
    \item No class is assigned to multiple student groups in each time slot. 
    \item No student group attends the same class more than once.
\end{itemize}
Each student group $i\in N$ has a valuation $v_{ijk}$ for each pair of class $j\in M$ and time slot $k\in R$ based on their preference for time and subject combinations. Then, solving the complete/partial Umax LSA problem leads to an allocation that maximizes utilitarian social welfare.

\subsubsection*{Balanced Job Rotation in Organizations}
In some companies, employees rotate through different departments or tasks.
Suppose that $n$ employees $N$ rotate between $n$ departments $M$ (e.g., sales, marketing, and operations), over $n$ time periods $R$. 
The Latin square constraint guarantees that:
\begin{itemize}
    \item Each employee works in at most one department at a time.
    \item No department is assigned to multiple employees in each time slot. 
    \item No employee works in the same class more than once.
\end{itemize}
Such a constraint is crucial for balancing workload and cross-training opportunities.
Each employee $i\in N$ has a preference $v_{ijk}$ regarding the ordinal position $k\in R$ for each department $j\in M$. 
Then, this situation can be modeled by the complete Umax LSA problem.

\subsubsection*{Experimental Rotations in Healthcare}
In clinical trials or studies with multiple treatment groups, it is required to design the order of treatment assignments of participants~\cite{chilton1955latin,richardson2018use,preece1991latin}.
For example, in a drug trial with $n$ different treatment methods, each group of patients receives each treatment method exactly once over $n$ treatment periods, ensuring fair treatment exposure.
Suppose that there are $n$ participants groups $N$, $n$ treatments $M$, and $n$ time slots $R$.
Each group of patients $i\in N$ has a preference $v_{ijk}$ regarding the ordinal position $k$ for each treatment $j\in M$. 
Then, the complete Latin square constraint ensures that each patient receives treatments in a structured sequence, ensuring equal administration without repetition in the same order. 
Thus, this situation can be modeled by the complete Umax LSA problem.

\section{Approximation Algorithms for Maximizing Utilitarian Social Welfare}\label{sec:approx-Umax}
In this section, we first present a $(1-1/e)$-approximation algorithm for the partial Umax LSA problem.
This algorithm is based on a technique used for the Latin square extension problem~\cite{gomes2004improved} and the separable assignment problem~\cite{fleischer2006tight}.
The algorithm constructs a configuration LP and then solves it by the \emph{ellipsoid method}~\cite{GLS2012}. It then rounds the solution with a contention resolution scheme.

We then provide a $(1-1/e)/4$-approximation algorithm for the complete Umax LSA problem by showing that $\alpha$-approximate solution of the partial LSA problem can be converted into $\alpha/4$-approximate solution of the complete LSA problem.

\subsection{Partial LSA}
Let $\cS\subseteq 2^{M\times R}$ be the set of all possible (partial) allocations for an agent, i.e., for every $S\in\cS$, $(j,k),(j,k')\in S$ implies $k=k'$ and $(j,k),(j',k)\in S$ implies $j=j'$.
For each agent $i\in N$ and $S\in\cS$, we introduce a binary variable $y_{iS}$, which indicates that $i$ receives $S$.
Then, we reformulate ILP~\eqref{eq:partialILP} as the following (exponential-size) ILP:
\begin{align}
\begin{array}{rll}
\max&\sum_{i\in N} \sum_{S\in\cS}(\sum_{(j,k)\in S}v_{ijk})y_{iS}&\\
\text{s.t.}& \sum_{S\in\cS}y_{iS}=1&(\forall i\in N),\\
& \sum_{i\in N} \sum_{S\in\cS:\, (j,k)\in S}y_{iS}\le 1 &(\forall j\in M,\,\forall k\in R),\\
& y_{iS}\in \{0,1\}  & (\forall i\in N,\,\forall S\in\cS).
\end{array}\label{eq:partialILP2}
\end{align}
A linear programming relaxation of \eqref{eq:partialILP2} is given as
\begin{align}
\begin{array}{rll}
\max&\sum_{i\in N} \sum_{S\in\cS}(\sum_{(j,k)\in S}v_{ijk})y_{iS}&\\
\text{s.t.}& \sum_{S\in\cS}y_{iS}=1&(\forall i\in N),\\
& \sum_{i\in N} \sum_{S\in\cS:\, (j,k)\in S}y_{iS}\le 1 &(\forall j\in M,\,\forall k\in R),\\
& y_{iS}\ge 0  & (\forall i\in N,\,\forall S\in\cS).
\end{array}\label{eq:LP-primal}
\end{align}
Note that the optimum value of \eqref{eq:LP-primal} is an upper bound of the partial Latin square allocation problem.

In what follows, we first show that LP~\eqref{eq:LP-primal} can be solved in a polynomial time by using the ellipsoid method~\cite{GLS2012}. 
This method works when we have a separation algorithm to solve the separation problem for the feasible region.
For a polyhedron $P\subseteq \mathbb{R}^N$, the separation problem for $P$ receives a vector $y$ and either asserts $y\in P$ or finds a vector $d$ such that $d^\top x > d^\top y$ for all $x\in P$.

\begin{lemma}\label{lem:partial-ellipsoid}
There exists a polynomial-time algorithm to solve \eqref{eq:LP-primal}.
\end{lemma}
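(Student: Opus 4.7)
The plan is to solve the LP by applying the ellipsoid method to its dual, which has polynomially many variables but exponentially many constraints. Concretely, I would first write down the dual of \eqref{eq:LP-primal}: introducing unrestricted variables $\alpha_i$ for each agent equality and nonnegative variables $\beta_{jk}$ for each cell inequality, the dual becomes
\begin{align*}
\min\quad & \sum_{i\in N}\alpha_i + \sum_{j\in M}\sum_{k\in R}\beta_{jk} \\
\text{s.t.}\quad & \alpha_i + \sum_{(j,k)\in S}\beta_{jk} \;\ge\; \sum_{(j,k)\in S} v_{ijk} \quad (\forall i\in N,\,\forall S\in\cS), \\
& \beta_{jk}\ge 0 \quad (\forall j\in M,\,\forall k\in R).
\end{align*}

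The next step is to design a polynomial-time separation oracle for this dual. Given a candidate $(\alpha,\beta)$, I need to decide, for each agent $i$, whether $\alpha_i\ge \max_{S\in\cS}\sum_{(j,k)\in S}(v_{ijk}-\beta_{jk})$, and otherwise return a violated $S$. Because a set $S\in\cS$ is precisely a matching in the bipartite graph on $M\cup R$, computing this maximum reduces to a maximum-weight bipartite matching on edge weights $w_{jk}=\max(0,\,v_{ijk}-\beta_{jk})$ (edges with non-positive weight are simply omitted), which can be solved in polynomial time by the Hungarian method. If the optimum matching value exceeds $\alpha_i$ for some $i$, the corresponding $S$ yields a violated constraint; otherwise $(\alpha,\beta)$ is feasible.

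With this oracle in hand, the ellipsoid method solves the dual in polynomial time and, crucially, touches only polynomially many constraints throughout its execution. Let $\cS'_i\subseteq\cS$ denote, for each agent $i$, the collection of sets returned by the separation oracle during the run. I would then solve the restricted primal in which $y_{iS}$ is forced to $0$ for all $S\notin\cS'_i$. This restricted primal has polynomial size, its dual is exactly the dual above restricted to the encountered constraints, and by LP duality the two share the same optimum---which equals the optimum of \eqref{eq:LP-primal}. Thus the restricted primal, solvable in polynomial time by any standard LP algorithm, yields an optimal solution to \eqref{eq:LP-primal}.

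The main obstacle is verifying that the separation subproblem really is a polynomially solvable optimization problem: this hinges on recognizing $\cS$ as the set of matchings in a bipartite graph and allowing the empty matching as an option (so that sets with only negative contributions are discarded). Once that reduction is clear, the remaining arguments---the equivalence of separation and optimization via the ellipsoid method of \cite{GLS2012}, and the reconstruction of a primal optimum from the polynomially many active dual constraints---are standard and mirror the approach used in the separable assignment problem~\cite{fleischer2006tight}.
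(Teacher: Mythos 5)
Your proposal is correct and follows essentially the same route as the paper's own proof: dualize to get a polynomial number of variables, separate via maximum-weight bipartite matching, run the ellipsoid method, and recover a primal optimum from the polynomially many constraints encountered. The only (welcome) extra detail is your explicit handling of non-positive edge weights, which the paper leaves implicit in the matching formulation.
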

\begin{proof}
As the number of variables in \eqref{eq:LP-primal} is exponential, we solve it via the following dual:
\begin{align}
{\setlength{\arraycolsep}{3pt}
\begin{array}{rll}
\min&\sum_{i\in N}p_i+\sum_{j\in M}\sum_{k\in R}q_{jk}&\\
\text{s.t.}& p_i+\sum_{(j,k)\in S}q_{jk}\ge \sum_{(j,k)\in S}v_{ijk}&(\forall i\in N,\,\forall S\in\cS),\\
& q_{jk}\ge 0  & (\forall j\in M,\,\forall k\in R).
\end{array}\label{eq:LP-dual}
}
\end{align}
This LP includes an exponential number of constraints but contains only a polynomial number of variables.

For this LP, we construct a separation algorithm.
For a given $(p,q)\in\mathbb{R}^N\times \mathbb{R}_+^{M\times R}$, its feasibility can be checked by computing $\max_{S\in\cS}\sum_{(j,k)\in S}(v_{ijk}-q_{jk})$ for each $i\in N$.
As the problem of maximizing $\sum_{(j,k)\in S}(v_{ijk}-q_{jk})$ can be viewed as a maximum weight matching problem in the complete bipartite graph $(M,R; M\times R)$, this can be solved in a polynomial time (see, e.g., \cite{korte2018combinatorial}).
Thus, we can construct a separation algorithm for LP~\eqref{eq:LP-dual}.

Using the ellipsoid method with this separation algorithm, we can solve LP~\eqref{eq:LP-dual} in a polynomial time~\cite{GLS2012}.
We consider the set of violated inequalities returned by the separation algorithm during the execution of the ellipsoid method. 
Then, the number of these inequalities is polynomial since the ellipsoid method runs in polynomial time.
Additionally, the system of these inequalities is equivalent to that of LP~\eqref{eq:LP-dual}.
Thus, taking the dual of this polynomial-sized dual program results in a primal program with a polynomial number of variables and constraints. 
This can be solved in polynomial time, yielding an optimum solution of \eqref{eq:LP-primal}.
\end{proof}

We then provide a rounded solution that is $(1-1/e)$-approximation for the partial Latin square allocation problem.
Let $y^*$ be an optimum solution of \eqref{eq:LP-primal}.
For each agent $i\in N$, we interpret $(y^*_{iS})_{S\in\cS}$ as probabilities and independently select exactly one allocation $S_i\in \cS$ according to these probabilities.
Each $(j,k)$ is allocated to an agent in $\argmax_{i\in N}\{v_{ijk}\mid (j,k)\in S_i\}$, if the domain of argmax is non-empty.
Our algorithm is summarized as \Cref{alg:partial}.

\begin{algorithm}[ht]
    \caption{$(1-1/e)$-approximation algorithm for the partial case}\label{alg:partial}
    \SetKwInOut{Input}{input}\Input{$(v_{ijk})_{i\in N,\, j\in M,\, k\in R}$}
    \SetKwInOut{Output}{output}\Output{a partial allocation $A$}
    Solve \eqref{eq:LP-primal} and let $y^*$ be an optimum solution of it\;
    For each $i\in N$, we interpret $(y^*_{iS})_{S\in\cS}$ as probabilities and independently select exactly one allocation $S_i\in \cS$ according to these probabilities\;
    Let $A\gets\emptyset$\;
    \ForEach{$j\in M$}{
        \ForEach{$k\in R$}{
            \If{$(j,k)\in S_i$ for some $i$}{
                Let $i^*\in\argmax_{i\in N:\, (j,k)\in S_i}v_{ijk}$\;
                $A\gets A\cup\{(i^*,j,k)\}$\;
            }
        }
    }
    \Return the partial allocation $A$\;
\end{algorithm}

\begin{lemma}\label{lem:partial-round}
The expected utilitarian social welfare of the rounded allocation $A$ of \cref{alg:partial} is at least $(1-1/e)\cdot\sum_{i\in N} \sum_{S\in\cS}(\sum_{(j,k)\in S}v_{ijk})y^*_{iS}$.
\end{lemma}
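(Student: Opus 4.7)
The plan is to analyze the expected welfare cell-by-cell. For each $(j,k)\in M\times R$ and $i\in N$, let $z^*_{ijk}=\sum_{S\in\cS:(j,k)\in S}y^*_{iS}$ denote the marginal probability that agent $i$ samples a bundle containing $(j,k)$. By construction of the sampling step, $\Pr[(j,k)\in S_i]=z^*_{ijk}$, and these events are independent across $i$. The LP constraint gives $\sum_{i\in N}z^*_{ijk}\le 1$ for every cell. Since each $S\in\cS$ contributes $\sum_{(j,k)\in S}v_{ijk}y^*_{iS}$, the LP objective rewrites as $V^*\coloneqq\sum_{(j,k)}\sum_i v_{ijk}z^*_{ijk}$, so it suffices to show, cell by cell, that the expected value extracted at $(j,k)$ is at least $(1-1/e)\sum_i v_{ijk}z^*_{ijk}$.

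Fix a cell $(j,k)$, abbreviate $w_i=v_{ijk}$ and $p_i=z^*_{ijk}$, and relabel the agents so that $w_1\ge w_2\ge\cdots\ge w_n$ (setting $w_{n+1}=0$). Under this ordering, the agent selected at $(j,k)$ in \Cref{alg:partial} is the smallest-indexed $i$ with $(j,k)\in S_i$, so the expected value at this cell equals
\begin{equation*}
E_{jk}=\sum_{i=1}^{n} w_i\,p_i\prod_{i'<i}(1-p_{i'})=\sum_{i=1}^{n}(w_i-w_{i+1})\Bigl(1-\prod_{i'\le i}(1-p_{i'})\Bigr),
\end{equation*}
where the second equality is Abel summation. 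A parallel Abel summation rewrites $\sum_i w_i p_i=\sum_{i=1}^{n}(w_i-w_{i+1})\bigl(\sum_{i'\le i}p_{i'}\bigr)$. Because $w_i-w_{i+1}\ge 0$, it therefore suffices to prove the prefix inequality $1-\prod_{i'\le i}(1-p_{i'})\ge (1-1/e)\sum_{i'\le i}p_{i'}$ for every $i$.

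Set $P=\sum_{i'\le i}p_{i'}$; by the LP constraint we have $P\le 1$. Using $1-x\le e^{-x}$ for $x\in[0,1]$, we get $\prod_{i'\le i}(1-p_{i'})\le e^{-P}$, hence $1-\prod_{i'\le i}(1-p_{i'})\ge 1-e^{-P}$. The concavity of $1-e^{-x}$ together with its values at $x=0$ and $x=1$ yields $1-e^{-P}\ge(1-1/e)P$ on $[0,1]$, completing the prefix inequality. Summing $E_{jk}\ge(1-1/e)\sum_i w_i p_i$ over all cells $(j,k)$ gives $\mathbb{E}[\sum_i v_i(A_i)]\ge(1-1/e)\,V^*$, which is the claim.

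The only delicate step is the contention-resolution bound, i.e., justifying the cellwise inequality $E_{jk}\ge(1-1/e)\sum_i v_{ijk}z^*_{ijk}$ in the presence of the value-maximizing tie-break; the Abel-summation reduction to prefix probabilities (which only uses the monotonicity of the $w_i$'s) is what makes this clean, after which everything reduces to the standard $1-e^{-P}\ge(1-1/e)P$ estimate on $[0,1]$ enabled by the LP feasibility $\sum_i z^*_{ijk}\le 1$.
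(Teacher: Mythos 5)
Your proof is correct, and while it shares the paper's overall skeleton (cell-by-cell decomposition into marginals $z^*_{ijk}$, sorting agents at each cell by nonincreasing value, and computing the expected value of the greedy tie-break as $\sum_i w_i p_i\prod_{i'<i}(1-p_{i'})$), the key inequality is established by a genuinely different argument. The paper appends a dummy index with $p_{n+1}=1-\sum_i p_i$ and $w_{n+1}=0$, applies Chebyshev's sum inequality to decouple the sum into $\bigl(\sum_i w_i p_i\bigr)\bigl(1-\prod_i(1-p_i)\bigr)$, and then bounds the product via AM--GM to get $1-(1-\tfrac{1}{n+1})^{n+1}\ge 1-1/e$. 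You instead perform Abel summation on both $E_{jk}$ and $\sum_i w_i p_i$, reducing the claim (via $w_i-w_{i+1}\ge 0$) to the prefix inequality $1-\prod_{i'\le i}(1-p_{i'})\ge(1-1/e)\sum_{i'\le i}p_{i'}$, which follows from $1-x\le e^{-x}$ and the concavity chord bound $1-e^{-P}\ge(1-1/e)P$ on $[0,1]$. Your route avoids both the dummy-agent padding and the (somewhat less standard) Chebyshev sum inequality, and it yields a per-prefix guarantee that makes the role of the LP constraint $\sum_i z^*_{ijk}\le 1$ transparent; the paper's route, in exchange, produces the marginally sharper constant $1-(1-\tfrac{1}{n+1})^{n+1}$ before relaxing to $1-1/e$. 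Both handle the value-maximizing tie-break in the same harmless way (tied agents contribute equal value, so attributing the cell to the first of them does not change the expectation).
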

\begin{proof}
Fix $(j,k)\in M\times R$. 
For each $i\in N$, define $x^*_{ijk}=\sum_{S\in\cS:\,(j,k)\in S}y^*_{iS}$.
Let $\sigma$ be a nonincreasing ordering of the agents with respect to the values $v_{ijk}$, i.e., $v_{\sigma(1)jk}\ge v_{\sigma(2)jk}\ge\dots\ge v_{\sigma(n)jk}$.
In addition, we assume that $\sigma(\ell)<\sigma(\ell+1)$ if $v_{\sigma(\ell)jk}=v_{\sigma(\ell+1)jk}$.
For notational simplicity, we write $x^*_{\sigma(n+1)jk}=1-\sum_{\ell=1}^n x^*_{\sigma(\ell)jk}~(=1-\sum_{i\in N}\sum_{S\in\cS:\,(j,k)\in S}y^*_{iS}\ge 0)$ and $v_{\sigma(n+1)jk}=0$.

By the definition of \cref{alg:partial}, we have $(\sigma(1),j,k)\in A$ with probability $x^*_{\sigma(1)jk}$ and $(\sigma(2),j,k)\in A$ with probability $(1-x^*_{\sigma(1)jk})x^*_{\sigma(2)jk}$.
Similarly, for each $\ell\in\{1,2,\dots,n\}$, we have $(\sigma(\ell),j,k)\in A$ with probability $x^*_{\sigma(\ell)jk}\prod_{t=1}^{\ell-1}(1-x^*_{\sigma(t)jk})$.
Hence, the contribution of $(j,k)$ in the expected value of $A$ is $\sum_{\ell=1}^n v_{\sigma(\ell)jk}\cdot x^*_{\sigma(\ell)jk}\prod_{t=1}^{\ell-1}(1-x^*_{\sigma(t)jk})$.

By using the Chebyshev's sum inequality\footnotemark{} and the AM-GM inequality, we obtain
\begin{align*}
\MoveEqLeft
\sum_{\ell=1}^n v_{\sigma(\ell)jk}\cdot x^*_{\sigma(\ell)jk}\prod_{t=1}^{\ell-1}(1-x^*_{\sigma(t)jk})\\
&=\sum_{\ell=1}^{n+1} x^*_{\sigma(\ell)jk}\cdot v_{\sigma(\ell)jk}\prod_{t=1}^{\ell-1}(1-x^*_{\sigma(t)jk})\\
&\ge \left(\sum_{\ell=1}^{n+1} x^*_{\sigma(\ell)jk}v_{\sigma(\ell)jk}\right)\left(\sum_{\ell=1}^{n+1}x^*_{\sigma(\ell)jk}\prod_{t=1}^{\ell-1}(1-x^*_{\sigma(t)jk})\right)\\
&= \left(\sum_{i\in N} x^*_{ijk}v_{ijk}\right)\left(1-\prod_{t=1}^{n+1}(1-x^*_{\sigma(t)jk})\right)\\
&\ge \left(\sum_{i\in N} x^*_{ijk}v_{ijk}\right)\left(1-\left(\frac{1}{n+1}\sum_{t=1}^{n+1}(1-x^*_{\sigma(t)jk})\right)^{n+1}\right)\\
&=\left(\sum_{i\in N} x^*_{ijk}v_{ijk}\right)\left(1-\left(1-\frac{1}{n+1}\right)^{n+1}\right)\\
&\ge \left(\sum_{i\in N} x^*_{ijk}v_{ijk}\right)\left(1-\frac{1}{e}\right).
\end{align*}

\footnotetext{$\mathbb{E}[f(X)g(X)]\ge \mathbb{E}[f(X)]\mathbb{E}[g(X)]$ if $f$ and $g$ are nonincreasing function~(see, e.g., \cite{besenyei2018picard}).}

Furthermore, we have
\begin{align*}
\sum_{(j,k)\in M\times R}\sum_{i\in N} x^*_{ijk}v_{ijk}
&=\sum_{(j,k)\in M\times R}\sum_{i\in N} v_{ijk}\sum_{S\in\cS:\, (j,k)\in S}y^*_{iS}\\
&=\sum_{i\in N}\sum_{S\in\cS}\bigg(\sum_{(j,k)\in S}v_{ijk}\bigg)y^*_{iS}.
\end{align*}
Hence, the expected utilitarian social welfare of the rounded allocation $A$ is at least
\begin{align*}
\sum_{(j,k)\in M\times R}\bigg(\sum_{i\in N} x^*_{ijk}v_{ijk}\bigg)\bigg(1-\frac{1}{e}\bigg)
&\ge \left(1-\frac{1}{e}\right)\sum_{i\in N}\sum_{S\in\cS}\bigg(\sum_{(j,k)\in S}v_{ijk}\bigg)y^*_{iS},
\end{align*}
and the proof is complete.
\end{proof}

We get the following theorem from \Cref{lem:partial-ellipsoid,lem:partial-round}.
\begin{theorem}\label{thm:partial-approx}
\cref{alg:partial} is a $(1-1/e)$-approximation algorithm for the partial Umax LSA problem.
\end{theorem}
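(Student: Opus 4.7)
The proof plan is very short since all the real work has been done in the two preceding lemmas; the theorem is essentially their conjunction. First I would argue that \Cref{alg:partial} runs in polynomial time. Step~1, solving LP~\eqref{eq:LP-primal}, takes polynomial time by \Cref{lem:partial-ellipsoid}. Step~2 (independently sampling one $S_i$ per agent from the distribution $(y^*_{iS})_{S\in\cS}$) is polynomial because, although $\cS$ is exponential, the support of $y^*$ has polynomial size: the primal solution recovered from the ellipsoid-based dual approach involves only the polynomially many variables corresponding to constraints generated by separation. The remaining double loop over $(j,k)\in M\times R$ and the $\argmax$ computation are obviously polynomial.

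Next I would chain the approximation guarantee. Every feasible integer allocation $A$ for the partial Umax LSA problem corresponds to a 0/1 feasible solution of \eqref{eq:partialILP2}, which is in turn feasible for the relaxation \eqref{eq:LP-primal}. Hence the optimal LP value $\sum_{i\in N}\sum_{S\in\cS}\bigl(\sum_{(j,k)\in S}v_{ijk}\bigr)y^*_{iS}$ is an upper bound on $\OPT$, the optimal utilitarian welfare of any partial allocation. By \Cref{lem:partial-round}, the expected utilitarian social welfare of the allocation $A$ produced by \Cref{alg:partial} is at least $(1-1/e)$ times this LP value, and hence at least $(1-1/e)\cdot\OPT$.

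Putting these two points together yields the claim. There is essentially no obstacle to overcome at this stage: the hardness was already absorbed into \Cref{lem:partial-ellipsoid} (solving an exponential-size LP via the ellipsoid method applied to the dual) and \Cref{lem:partial-round} (the Chebyshev/AM-GM argument bounding the per-cell expected contribution). The only small point worth flagging explicitly is that the guarantee is in expectation over the random choices of \Cref{alg:partial}; if a deterministic statement is desired, one can note standard derandomization via the method of conditional expectations, but this is not needed for the stated theorem.
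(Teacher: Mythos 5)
Your proposal is correct and follows exactly the paper's route: the paper derives \Cref{thm:partial-approx} directly by combining \Cref{lem:partial-ellipsoid} (polynomial-time solvability of LP~\eqref{eq:LP-primal}) with \Cref{lem:partial-round} (the rounding guarantee), using the already-noted fact that the LP optimum upper-bounds the integral optimum. Your additional remarks on the polynomial-size support of $y^*$ and on derandomization are consistent with what the paper states around the theorem.
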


It is worth mentioning that we can derandomize \cref{alg:partial} in polynomial time by a standard technique using a conditional expectation.
Indeed, by sequentially selecting a matching $S_i\in\{S\in\cS\mid y^*_{iS}>0\}$ that maximizes the conditional expected utilitarian social welfare, we can deterministically obtain a partial allocation whose utilitarian social welfare is at least the expected utilitarian social welfare of \cref{alg:partial}.
Thus, we can conclude that there exists a deterministic $(1-1/e)$-approximation algorithm for the partial Umax LSA problem.\footnote{For more details of this derandomization technique, see, e.g., the book by \citet[Section 5.2]{williamson2011design}.}

\subsection{Complete LSA}
In this subsection, we construct a complete allocation from a partial allocation without reducing utilitarian social welfare by more than a quarter. Our algorithm first divides the given partial allocation into four blocks, such that any of them can be extended to a complete allocation. It then extends the block with the maximum utilitarian social welfare.

The division is based on Ryser's theorem for a Latin rectangle extension~\cite{ryser1951combinatorial}. It implies that any partial allocation $A$ that satisfies the following conditions can be extended to a complete allocation: there exist positive integers $m$ and $r$ such that $m+r\le n$ and $A(j,k)\ne\bot$ if and only if $j\in [m]$ and $k\in [r]$.\footnotemark{}
The proof of Ryser's theorem is constructive, based on K\"{o}nig's edge coloring theorem. 
It is well known that the edge coloring can be found efficiently.
\begin{lemma}[\cite{Schrijver1998}]\label{lem:Konig}
For a bipartite graph with $m$ edges and a maximum degree of $\Delta$, we can find an edge coloring with $\Delta$ colors in $O(m\Delta)$ time.
\end{lemma}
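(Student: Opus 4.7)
The plan is to construct the $\Delta$-edge-coloring incrementally using the classical alternating-path argument, while maintaining at every vertex a length-$\Delta$ array that records, for each color, the incident edge (if any) of that color. This data structure supports in $O(1)$ time both the query ``which color is missing here?'' and the update ``change the color of edge $e$ to $c$.''

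Process the edges of the bipartite graph $G=(U\cup V,E)$ in arbitrary order and let $(u,v)$ be the next edge to color. Both endpoints currently carry strictly fewer than $\Delta$ colors, so I can pick a color $\alpha$ missing at $u$ and a color $\beta$ missing at $v$. If $\alpha=\beta$, assign this common color to $(u,v)$ and continue. Otherwise, consider the subgraph $H$ of edges already colored $\alpha$ or $\beta$. Every vertex has at most one edge of each color in $H$, so $H$ is a vertex-disjoint union of simple paths and even cycles. The component of $H$ containing $u$ is a path $P$ starting at $u$, because $\alpha$ is missing at $u$ and hence $u$ has degree at most one in $H$; the first edge of $P$, if any, has color $\beta$.

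The crucial step is to show that $v\notin V(P)$. Suppose for contradiction that $v\in V(P)$. Since $G$ is bipartite and $(u,v)\in E$, the subpath of $P$ from $u$ to $v$ has odd length. Along $P$ the colors alternate $\beta,\alpha,\beta,\alpha,\dots$, starting with $\beta$ at $u$, so the final edge of this odd-length subpath (the one incident to $v$) would be colored $\beta$, contradicting the fact that $\beta$ is missing at $v$. Hence $v\notin V(P)$, and I may swap the colors $\alpha$ and $\beta$ along every edge of $P$ without creating any conflict elsewhere, since outside $P$ no edge is recolored and inside $P$ each vertex again ends up with at most one edge of each color. After the swap, $\beta$ is missing at $u$ (its incident $\beta$-edge, if any, has become an $\alpha$-edge) and still missing at $v$ (which $P$ avoids), so I color $(u,v)$ with $\beta$.

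The main obstacle I anticipate is the $O(m\Delta)$ running time, since the swap step costs $\Theta(|P|)$ per edge and $|P|$ can be as large as $\Theta(m)$ in the worst case, giving only an $O(m^2)$ bound from the basic scheme. To reach the claimed bound I would combine the above step with a recursive Euler-partition: in $O(m)$ time decompose $E$ into two edge-disjoint spanning subgraphs of maximum degree $\lceil\Delta/2\rceil$ each, via an Euler tour of an Eulerianization of $G$, color each subgraph recursively on disjoint palettes, and concatenate. The recurrence $T(m,\Delta)\le 2\,T(m/2,\lceil\Delta/2\rceil)+O(m)$ solves to $O(m\log\Delta)$, which is within the $O(m\Delta)$ budget claimed in the statement; the correctness argument above is used verbatim at the leaves of this recursion, where $\Delta\le 1$ and coloring is trivial.
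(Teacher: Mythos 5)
The paper does not actually prove this lemma; it is imported as a black box from the cited work of Schrijver (\emph{Bipartite edge colouring in $O(\Delta m)$ time}), so there is no in-paper argument to compare against. Your first two paragraphs are a correct and standard proof of K\"onig's edge-colouring theorem: the missing-colour arrays, the observation that the $\alpha$--$\beta$ component of $u$ is a path whose first edge is coloured $\beta$, and the parity argument showing $v\notin V(P)$ are all sound, and they establish that $\Delta$ colours suffice. Algorithmically, however, this yields an $O(mn)$-time procedure --- each augmentation walks an alternating path whose length is bounded only by the number of vertices, not by any function of $\Delta$ --- so it does not by itself give the $O(m\Delta)$ bound the lemma asserts.

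The repair you propose does not close this gap. Splitting $E$ by an Euler tour into two subgraphs of maximum degree $\lceil\Delta/2\rceil$ and colouring them recursively on \emph{disjoint} palettes uses $C(\Delta)=2\,C(\lceil\Delta/2\rceil)$ colours, i.e.\ $2^{\lceil\log_2\Delta\rceil}$, which exceeds $\Delta$ whenever $\Delta$ is not a power of two: already for $\Delta=3$ you get $4$ colours, and for a $3$-regular bipartite graph both halves of the Euler split can genuinely have maximum degree $2$ and each require $2$ colours. So the recursion proves neither the colour count nor, therefore, the lemma; and the alternating-path argument is not in fact ``used verbatim at the leaves'' to rescue it, since at the leaves $\Delta\le 1$ and there is nothing left to merge. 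Getting exactly $\Delta$ colours within $O(m\Delta)$ (or $O(m\log\Delta)$) time requires a further idea, such as Schrijver's weighted Eulerian-orientation extraction of perfect matchings from a $\Delta$-regularized supergraph, or Gabow-style special treatment of odd $\Delta$. It is worth noting that for the only use of the lemma in this paper (the graphs $G_1,G_2$ have $O(n)$ vertices, $O(n^2)$ edges, and $\Delta\le n$), your naive $O(mn)=O(n^3)$ bound from the first paragraph already suffices for \Cref{lem:extend}; but it does not prove the lemma as stated.
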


\footnotetext{The actual Ryser's theorem~\cite{ryser1951combinatorial} is the following stronger statement.
Let $M'\subseteq M$ and $R'\subseteq R$ with $|M'|=m$ and $|R'|=r$. 
For any partial allocation $A$ such that $A(j,k)\ne\bot$ if and only if $(j,k)\in M'\times R'$, there exists a complete allocation $\bar{A}\supseteq A$ if and only if
$|\{(j,k)\in M'\times R'\mid A(j,k)=i\}|\ge m+r-n$ for all $i\in N$.}

For a partial allocation $A$, let $M'=\{j\in M\mid A(j,k)\ne\bot~(\exists k\in R)\}$ be the set of items that are not assigned in at least one round, and let $R'=\{k\in R\mid A(j,k)\ne\bot~(\exists j\in M)\}$ be the set of rounds in which at least one item is unassigned.
Suppose that $|M'|+|R'|\le n$.
Then, we construct an extension $\overline{A}$ such that $A\subseteq \overline{A}\subseteq N\times M'\times R'$ by allocating every $(j,k)\in M'\times R'$ with $A(j,k)=\bot$ in a greedy manner without violating feasibility.
Now, we can apply the construction method of Ryser's theorem~\cite{ryser1951combinatorial}.
We first make an extension with respect to rounds.
To do so, we construct a bipartite graph $G_1=(N,M';\{(p,q)\in N\times M'\mid \overline{A}(q,k)\ne p~(\forall k\in R')\})$, where an edge $(p,q)$ means that agent $p$ does not get item $q$ yet.
Then, we compute an edge coloring of $G_1$ with $n-|R'|$ colors. Here, colors correspond to rounds not allocated yet (i.e., $R\setminus R'$). We allocate an item $q$ to agent $p$ according to the color of edge $(p,q)$.
Note that such a coloring of $n-|R'|$ colors exists since degrees of left-side vertices are at most $|M'|\le n-|R'|$ and degrees of right-side vertices are exactly $n-|R'|$.
We then make an extension with respect to items.
We construct a bipartite graph $G_2=(N,R;\{(p,q)\in N\times R\mid \overline{A}(j,q)\ne p~(\forall j\in M')\})$, where an edge $(p,q)$ means that $p$ does not get any item at round $q$.
We compute an edge coloring of $G_2$ with $n-|M'|$ colors and allocate. Here, colors correspond to rounds not allocated yet (i.e., $R\setminus R'$). We allocate an item corresponding to the color of $(p,q)$ to agent $p$ at round $q$.
Note that such a coloring of $n-|M'|$ colors exists since degrees of vertices are exactly $n-|M'|$.
Our algorithm is summarized in \cref{alg:extend}.

\begin{algorithm}[ht]
    \caption{Extending a partial allocation}\label{alg:extend}
    \SetKwInOut{Input}{input}\Input{$(v_{ijk})_{i\in N,\, j\in M,\, k\in R}$ and a partial allocation $A$ such that $|\{j\in M\mid A(j,k)\ne\bot~(\exists k\in R)\}|+|\{k\in R\mid A(j,k)\ne\bot~(\exists j\in M)\}|\le n$}
    \SetKwInOut{Output}{output}\Output{a complete allocation $\overline{A}\supseteq A$}
    Let $M'\gets \{j\in M\mid A(j,k)\ne\bot~(\exists k\in R)\}$ and $m\gets |M'|$\; 
    Let $R'\gets \{k\in R\mid A(j,k)\ne\bot~(\exists j\in M)\}$ and $r\gets |R'|$\;
    Set $\overline{A}\gets A$\;
    \tcc{Fill $M'\times R'$}
    \ForEach{$(j,k)\in M'\times R'$}{
        \If{$A(j,k)=\bot$}{
            Select an $i\in N$ such that $\overline{A}(j',k)\ne i~(\forall j'\in M')$ and $\overline{A}(j,k')\ne i~(\forall k'\in R')$\;
            Set $\overline{A}\gets \overline{A}\cup\{(i,j,k)\}$\;
        }
    }
    \tcc{Fill $M'\times R$}
    Construct a bipartite graph $G_1=(N,M';\{(p,q)\in N\times M'\mid \overline{A}(q,k)\ne p~(\forall k\in R')\})$\;
    Compute an edge coloring of $G_1$ with $n-r$ colors and let $E_1,E_2,\dots,E_{n-r}$ be the partition of the edges corresponds to the coloring\;
    Let $\ell\gets 1$\;
    \ForEach{$k\in R\setminus R'$}{
        \lForEach{$(p,q)\in E_{\ell}$}{
            $\overline{A}\gets \overline{A}\cup\{(p,q,k)\}$
        }
        $\ell\gets \ell+1$\;
    }
    \tcc{Fill $M\times R$}
    Construct a bipartite graph $G_2=(N,R;\{(p,q)\in N\times R\mid \overline{A}(j,q)\ne p~(\forall j\in M')\})$\;
    Compute an edge coloring of $G_2$ with $n-m$ colors and let $F_1,F_2,\dots,F_{n-m}$ be the partition of the edges corresponds to the coloring\;
    Let $\ell\gets 1$\;
    \ForEach{$j\in M\setminus M'$}{
        \lForEach{$(p,q)\in F_{\ell}$}{
            $\overline{A}\gets \overline{A}\cup\{(p,j,q)\}$
        }
        $\ell\gets \ell+1$\;
    }    
    \Return complete allocation $\overline{A}$\;
\end{algorithm}

\begin{lemma}\label{lem:extend}
Given a partial allocation $A$ such that $|\{j\in M\mid A(j,k)\ne\bot~(\exists k\in R)\}|+|\{k\in R\mid A(j,k)\ne\bot~(\exists j\in M)\}|\le n$, the extension of $A$ computed by \cref{alg:extend} is a complete allocation. Moreover, \cref{alg:extend} can be implemented to run in $O(n^3)$ time.
\end{lemma}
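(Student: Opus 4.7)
The plan is to verify the feasibility invariants of the three phases of \cref{alg:extend} and then bound the running time of each phase by $O(n^3)$.

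\textbf{Phase 1.} Whenever \cref{alg:extend} processes a cell $(j,k)\in M'\times R'$ with $A(j,k)=\bot$, row $j$ restricted to $R'$ currently holds at most $r-1$ other agents and column $k$ restricted to $M'$ holds at most $m-1$, so at most $m+r-2\le n-2$ agents are forbidden. Hence the ``select'' step always succeeds, and after phase 1 the block $M'\times R'$ is completely filled as an $m\times r$ Latin rectangle on $N$.

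\textbf{Phase 2.} I would next compute the degrees in $G_1$: each $q\in M'$ is blocked by the $r$ agents already holding item $q$ in $R'$, giving $\deg_{G_1}(q)=n-r$, while each $p\in N$ has $\deg_{G_1}(p)\le m$. The hypothesis $m+r\le n$ then yields $\Delta(G_1)=n-r$, and \Cref{lem:Konig} produces an edge coloring with $n-r$ color classes. Because every $q\in M'$ has degree exactly $n-r$, each color class saturates $M'$; assigning the $\ell$-th class to a distinct round of $R\setminus R'$ extends $\overline{A}$ to all of $M'\times R$ without repeating an item for any agent or placing two items on an agent in any new round.

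\textbf{Phase 3.} After phase 2, each row of $M'\times R$ contains every agent exactly once, so every agent $p$ already holds items from $M'$ in exactly $m$ distinct rounds. Consequently $G_2$ is $(n-m)$-regular on both sides, and \Cref{lem:Konig} produces $n-m$ color classes, each a perfect matching between $N$ and $R$. Mapping the color classes bijectively to the items of $M\setminus M'$ places each such item once per round with $n$ distinct agents, only in rounds where the recipient was previously unserved from $M'$, so the final $\overline{A}$ is a complete Latin square extending $A$.

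\textbf{Running time and main obstacle.} Phase 1 inspects at most $n^2$ cells with an $O(n)$ scan each, for $O(n^3)$. The graphs $G_1$ and $G_2$ each have $O(n^2)$ edges and maximum degrees $n-r$ and $n-m$, so \Cref{lem:Konig} colors them in $O(n^3)$; the concluding bookkeeping contributes $O(n^2)$. The main subtlety I anticipate is the degree analysis of $G_1$: the hypothesis $m+r\le n$ is essential to guarantee $\Delta(G_1)=n-r$, since agent-side degrees can otherwise be as large as $m$ and exceed the number of rounds available in $R\setminus R'$.
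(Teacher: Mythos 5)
Your proposal is correct and follows essentially the same route as the paper: the paper's proof simply points back to its pre-algorithm discussion, which makes exactly your three observations (greedy fill of the $m\times r$ block succeeds since at most $m+r-2\le n-2$ agents are forbidden per cell, $G_1$ has item-side degree exactly $n-r$ and agent-side degree at most $m\le n-r$, and $G_2$ is $(n-m)$-regular), and then invokes \Cref{lem:Konig} for the $O(n^3)$ bound on graphs with $O(n^2)$ edges and degree at most $n$. Your write-up just spells out the degree counts and the saturation of the color classes in more detail than the paper does.
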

\begin{proof}
The allocation computed by \cref{alg:extend} is a complete extension based on the above discussion. The computational time is $O(n^3)$ according to \Cref{lem:Konig}, since $G_1$ and $G_2$ have at most $O(n^2)$ edges and degrees of vertices are at most $n$.
\end{proof}

By combining \Cref{thm:partial-approx} and \Cref{lem:extend}, we provide a $(1-1/e)/4$-approximate algorithm for the complete LSA problem.
Let $A$ be the partial allocation obtained by \cref{alg:partial}.
If $A$ is a complete allocation, then $A$ is a desired allocation.
Otherwise (i.e., $A(j,k)=\bot$ for some $(j,k)\in M\times R$), we partition it into four equal-sized blocks.
If $n$ is even, the sizes of the blocks are $n/2\times n/2$.
If $n$ is odd, the sizes of the blocks are $(n+1)/2 \times (n-1)/2$ or $(n-1)/2\times (n+1)/2$, where one cell is remaining. We set that the remaining cell is unassigned. 
We choose a block with a maximum utilitarian social welfare and extend it to a complete allocation by \cref{alg:extend}.
Our algorithm is formally described in \cref{alg:complete}.

\begin{algorithm}[t]
    \caption{$(1-1/e)/4$-approximation algorithm for the complete case}\label{alg:complete}
    \SetKwInOut{Input}{input}\Input{$(v_{ijk})_{i\in N,\, j\in M,\, k\in R}$}
    \SetKwInOut{Output}{output}\Output{a complete allocation}
    Compute a partial allocation $A$ by \cref{alg:partial}\;
    \lIf{$A$ is complete }{\Return $A$}
    \tcc{Divide $A$ into four blocks}
    \If{$n$ is even}{
        $A^{(1)}\gets \{(i,j,k)\in A\mid j\le n/2,\ k\le n/2\}$\;
        $A^{(2)}\gets \{(i,j,k)\in A\mid j\le n/2,\ k> n/2\}$\;
        $A^{(3)}\gets \{(i,j,k)\in A\mid j>n/2,\ k\le n/2\}$\;
        $A^{(4)}\gets \{(i,j,k)\in A\mid j>n/2,\ k>n/2\}$\;
    }
    \Else{
        Relabel the items and the rounds so that $A((n+1)/2,(n+1)/2)=\bot$\;
        $A^{(1)}\gets \{(i,j,k)\in A\mid j< (n+1)/2,\ k\le (n+1)/2\}$\;
        $A^{(2)}\gets \{(i,j,k)\in A\mid j\le (n+1)/2,\ k> (n+1)/2\}$\;
        $A^{(3)}\gets \{(i,j,k)\in A\mid j\ge (n+1)/2,\ k< (n+1)/2\}$\;
        $A^{(4)}\gets \{(i,j,k)\in A\mid j> (n+1)/2,\ k\ge(n+1)/2\}$\;
    }
    Let $A^*\gets\argmax_{A^{(\ell)}}\sum_{(i,j,k)\in A^{(\ell)}}v_{ijk}$\;
    \Return extension of $A^*$ computed by \cref{alg:extend}\;
\end{algorithm}


\begin{theorem}\label{thm:complete-approx}
\cref{alg:complete} is a $(1-1/e)/4$-approximation algorithm for the complete Umax LSA.
\end{theorem}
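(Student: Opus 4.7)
The plan is to chain \Cref{thm:partial-approx} with a blockwise reconstruction argument. By \Cref{thm:partial-approx}, the partial allocation $A$ produced by \cref{alg:partial} satisfies $\mathbb{E}[v(A)] \ge (1-1/e)\cdot \OPT_{\text{part}}$, where $\OPT_{\text{part}}$ and $\OPT_{\text{comp}}$ denote the optima of the partial and complete Umax LSA problems, respectively. Since every complete allocation is also partial, $\OPT_{\text{part}} \ge \OPT_{\text{comp}}$, so it suffices to bound by a factor of $4$ the loss incurred when \cref{alg:complete} reconstructs a complete allocation out of $A$.

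The first step is to show that the four blocks $A^{(1)},\ldots,A^{(4)}$ partition the triples of $A$. For even $n$ the definitions tile $M\times R$ exactly, so this is immediate. For odd $n$, \cref{alg:complete} is entered only when $A$ is incomplete, so there exists a cell $(j^\star,k^\star)$ with $A(j^\star,k^\star)=\bot$; after the explicit relabeling this cell becomes $((n+1)/2,(n+1)/2)$, and a short case check based on whether each coordinate is strictly below, equal to, or strictly above $(n+1)/2$ confirms that every other cell lies in exactly one block. Consequently $\sum_{\ell=1}^4 v(A^{(\ell)}) = v(A)$, and averaging yields $v(A^*) \ge v(A)/4$ for the block $A^*$ chosen by the algorithm.

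The second step is to verify that $A^*$ meets the hypothesis of \cref{lem:extend}. Each block is contained in an $m\times r$ subgrid with $m,r\in\{n/2\}$ when $n$ is even and $m,r\in\{(n-1)/2,(n+1)/2\}$ when $n$ is odd; either way $m+r\le n$, so the number of items appearing in $A^*$ plus the number of rounds appearing in $A^*$ is at most $n$. \Cref{lem:extend} then produces a complete extension $\overline{A^*}\supseteq A^*$, and since valuations are nonnegative we have $v(\overline{A^*})\ge v(A^*)$.

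Putting these observations together, taking expectation over the randomness in \cref{alg:partial} gives
\begin{align*}
\mathbb{E}[v(\overline{A^*})] \;\ge\; \tfrac{1}{4}\,\mathbb{E}[v(A)] \;\ge\; \tfrac{1-1/e}{4}\cdot \OPT_{\text{part}} \;\ge\; \tfrac{1-1/e}{4}\cdot \OPT_{\text{comp}},
\end{align*}
which is the claimed approximation ratio. There is no single hard step; the only point requiring care is verifying the odd-$n$ partition, which is precisely why the algorithm explicitly relabels to place an unassigned cell at $((n+1)/2,(n+1)/2)$ so that the four blocks exhaust the support of $A$.
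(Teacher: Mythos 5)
Your proposal is correct and follows essentially the same route as the paper's proof, which simply chains $\overline{\OPT}\le\OPT\le\frac{1}{1-1/e}\sum_{(i,j,k)\in A}v_{ijk}\le\frac{4}{1-1/e}\sum_{(i,j,k)\in\overline{A}}v_{ijk}$ using \Cref{thm:partial-approx} and \Cref{lem:extend}. You actually spell out more than the paper does---the odd-$n$ partition check and the verification that each block meets the hypothesis of \Cref{lem:extend}---both of which are left implicit there.
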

\begin{proof}
Let $\OPT$ and $\overline{\OPT}$ be the optimum values of \eqref{eq:partialILP} and \eqref{eq:completeILP}, respectively. 
Additionally, let $A$ and $A^*$ be the partial and complete allocations obtained by \cref{alg:partial} and \cref{alg:complete}.
Then, we have $\overline{\OPT}\le \OPT\le \frac{1}{1-1/e}\cdot\sum_{(i,j,k)\in A}v_{ijk}\le \frac{4}{1-1/e}\cdot \sum_{(i,j,k)\in\overline{A}}v_{ijk}$ by \Cref{thm:partial-approx}.
This means that \cref{alg:complete} is a $(1-1/e)/4$-approximation algorithm.
\end{proof}

\section{FPT algorithms for Maximizing Utilitarian Social Welfare}\label{sec:FPT-Umax}
 
In this section, we provide FPT algorithms for the Umax LSA problems with respect to the number of agents $n$ and the optimum value, respectively.

Regarding the value of $n$, the task is not difficult because we can enumerate all the possible allocations in FPT time.
More precisely, the number of complete and partial allocations are at most $n^{n^2}\le 2^{O(n^2\log n)}$ and $(n+1)^{n^2}\le 2^{O(n^2\log n)}$ by considering all the possible assignments of each cell.\footnotemark 
\footnotetext{Let $L(n)$ be the number of complete Latin squares of order $n$. It is known that $(L(n))^{1/n^2}\sim e^{-2}n$~\cite[Theorem 17.3]{van2001course}, and hence $L(n)=2^{\Theta(n^2\log n)}$.}
For each allocation, we can check the feasibility and compute the utilitarian social welfare in $O(n^2)$ time. 
It should be noted that we can also solve the Emax problem in the same manner.
Thus, we obtain the following theorem.
\begin{theorem}\label{thm:FPT-n}
There are FPT algorithms with respect to $n$ whose computational complexity is $e^{O(n^2\log n)}$ for the LSA problems of partial/complete Umax/Emax.
\end{theorem}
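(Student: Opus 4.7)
The plan is a direct brute-force enumeration, since having $n$ as the parameter lets us afford a doubly-exponential dependence on $n$. I would first set up the search space: a candidate allocation is specified by assigning each of the $n^2$ cells $(j,k) \in M \times R$ to a symbol in $N \cup \{\bot\}$, where $\bot$ indicates an unassigned cell (relevant only for the partial problem). This gives at most $(n+1)^{n^2}$ candidates in the partial case and $n^{n^2}$ in the complete case; both are bounded by $e^{O(n^2 \log n)}$.

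Second, for each candidate I would check feasibility against the three Latin-square conditions stated in \Cref{sec:preliminaries}: at most one cell per (agent, round) pair, at most one cell per (agent, item) pair, and at most one agent per (item, round) pair. Each condition reduces to scanning the $n^2$ cells grouped by the relevant coordinate, which takes $O(n^2)$ time in total. For the complete Umax/Emax variants I would additionally verify that no cell carries the $\bot$ symbol (equivalently, discard such candidates from the outset by only enumerating maps into $N$).

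Third, for each feasible candidate I would compute either $\sum_{i \in N} v_i(A_i)$ or $\min_{i \in N} v_i(A_i)$ in $O(n^2)$ time by a single pass through the cells, and keep a running maximum. Returning the best feasible allocation encountered solves all four variants (partial/complete $\times$ Umax/Emax). The total running time is $e^{O(n^2 \log n)} \cdot O(n^2) = e^{O(n^2 \log n)}$, matching the claimed bound.

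There is no genuine obstacle here; the only point worth checking carefully is the arithmetic of the exponent, namely that $(n+1)^{n^2} = e^{n^2 \ln(n+1)} = e^{O(n^2 \log n)}$ absorbs the polynomial per-candidate overhead without changing the asymptotic form. Since the algorithm's dependence on the input valuations enters only through the $O(n^2)$ welfare evaluation, the same enumeration template handles the Emax objective identically, which is what yields the uniform statement covering all four problem variants.
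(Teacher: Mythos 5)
Your proposal is correct and matches the paper's argument exactly: the paper also enumerates all $(n+1)^{n^2}$ (partial) or $n^{n^2}$ (complete) cell-to-agent assignments, checks feasibility and evaluates the objective in $O(n^2)$ time per candidate, and notes that the same enumeration handles Emax. No differences worth noting.
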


When the optimum value is the parameter, we use the \emph{color coding technique}~\cite{cygan2015parameterized} to construct FPT algorithms.
Let $A^*$ be an optimum solution and $\alpha=\sum_{(i,j,k)\in A^*}v_{ijk}$.
If $\alpha\ge n/2$, the FPT algorithms with respect to $n$ are applicable.
Hence, without loss of generality, we may assume that $\alpha<n/2$.
Let $S^*=\{(j,k)\in M\times R\mid (i,j,k)\in A^* \text{ and } v_{ijk}>0\}$ be the set of item--round pairs that are assigned to some agent in the optimum solution with positive utility, and 
let $T^*=\{i\in N\mid (i,j,k)\in A^* \text{ and } v_{ijk}>0\}$ be the set of agents who receive positive in the optimum solution.
Additionally, let $s^*=|S^*|$ and $t^*=|T^*|$.
Since $v_{ijk}\ge 1$ for any $(i,j,k)\in S^*$, it follows that $t^*\le s^*\le \alpha$.
We guess the values of $s^*$ and $t^*$, and denote these guessed values as $s$ and $t$, respectively.
The parameter $s$ is inferred sequentially as $1,2,\dots$.
If the estimated $s$ is at most $\alpha$, then we can find an allocation whose utilitarian social welfare is at least $s$ for some $t\in [s]$. Therefore, if the optimal utilitarian social welfare currently obtained is less than $s$, it can be concluded that $\alpha$ is less than $s$, and the process can be terminated.


Let $\chi\colon M\times R\to [s]$ be a coloring of cells $M\times R$ and let $\psi\colon [s]\to [t]$ be a map from $[s]$ to $[t]$.
We will select a coloring such that the elements in $S$ are colored with pairwise distinct colors.
Additionally, we will select $\psi$ such that 
$\psi\circ\chi(j,k)=\psi\circ\chi(j',k')$ if and only if $i=i'$ for all $(i,j,k),(i',j',k')\in A^*$ with $v_{ijk},v_{i'j'k'}>0$.

We enumerate all possible maps to select a desired function of $\psi$.
Let $\mathcal{P}$ be the set of all possible functions from $[s]$ to $[t]$, which is at most $t^s\le \alpha^\alpha$.
For coloring $\chi$, we use a tool called \emph{splitter}.
An $(a,b,c)$-splitter is a family $\mathcal{F}$ of functions from $[a]$ to $[c]$ such that, for any $X\subseteq [a]$ of size $b$, there exists $f\in\mathcal{F}$ that is injective on $X$.
It is known that there exists an $(a,b,b)$-splitter of size $e^{O(b\log^2 b)}\cdot\log a$ that can be constructed in time $e^{O(b\log^2 b)}\cdot a\log a$~\cite{naor1995splitters,cygan2015parameterized}.
Let $\mathcal{X}$ be a $(n^2, s, s)$-splitter.

Suppose that $\chi\in\mathcal{X}$ and $\psi\in\mathcal{P}$ satisfy the desired condition: $\psi\circ\chi(j,k)=\psi\circ\chi(j',k')$ if and only if $i=i'$ for all $(i,j,k),(i',j',k')\in A^*$ with $v_{ijk},v_{i'j'k'}>0$.
For each $\ell\in [t]$, let $\mathcal{Q}_\ell\subseteq \{(j,k)\in M\times R\mid \psi\circ\chi(j,k)=\ell\}$ be the set of all possible (feasible) allocations for an agent.
For each $i\in N$ and $\ell\in[t]$, let 
\begin{align}
Q_{i\ell}\in\argmax\left\{v_i(Q)\mid Q\in\mathcal{Q}_\ell\right\}. \label{eq:Qil}
\end{align}
Then, we can find an optimum partial allocation by computing the maximum weight matching on a complete bipartite graph between $N$ and $[t]$ with weights $v_i(Q_{i\ell})$ for each $(i,\ell)\in N\times[t]$.

Our algorithm outputs the best assignment among those found by the above procedure. 
If the optimum value is less than $n/2$, it outputs a complete allocation by using \cref{alg:extend}.
The algorithm is formally described in \cref{alg:FPT}.

\begin{algorithm}[ht]
    \caption{FPT algorithm}\label{alg:FPT}
    \SetKwInOut{Input}{input}\Input{$(v_{ijk})_{i\in N,\, j\in M,\, k\in R}$}
    \SetKwInOut{Output}{output}\Output{an optimal allocation}
    Let $A^*$ be an arbitrary complete allocation and $u\gets 0$\;
    \For{$s\gets 1,2,\dots$}{
        \For{$t\gets 1,2,\dots,s$}{
            Let $\mathcal{P}$ be the set of all possible functions from $[s]$ to $[t]$\;
            Let $\mathcal{X}$ be an $(n^2,s,s)$-splitter\;
            \ForEach{$(\psi,\chi)\in \mathcal{P}\times\mathcal{X}$}{
                Define $Q_{i\ell}$ for each $i\in N$ and $\ell\in[t]$ as in \eqref{eq:Qil}\;
                Construct a complete bipartite graph between $N$ and $[t]$ with weights $(v_i(Q_{i\ell}))_{i\in N,\,\ell\in[t]}$\;
                Compute a maximum weight matching $\mu\subseteq N\times [t]$ in the bipartite graph and let $v$ be its weight\;
                \If{$v>u$}{
                    $u\gets v$ and
                    $A^*\gets \{(i,j,k)\mid (i,\ell)\in\mu,\ (j,k)\in Q_{i\ell},\ v_{ijk}>0\}$\;
                }
            }
        }
        \If{$u\ge n/2$}{
            Enumerate all the possible allocations and \Return the optimum one\;
        }
        \lIf{$s>u$}{\Return extension of $A^*$ computed by \cref{alg:extend}}
    }
\end{algorithm}

\begin{theorem}\label{thm:FPT-alpha}
There exist FPT algorithms with respect to the Umax value for both the partial and complete LSA problems.
\end{theorem}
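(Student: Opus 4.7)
The plan is to analyze \Cref{alg:FPT} and show both correctness and an FPT running time in the parameter $\alpha = \OPT$. First I would argue that the outer loop eventually reaches a value of $s$ large enough to certify optimality: once $s$ exceeds the true $s^*$, the algorithm will have already examined a splitter/map pair that isolates the optimum support, so the incumbent $u$ will satisfy $u \ge \alpha$, and the stopping condition $s > u$ will correctly halt with a solution of value $u = \alpha$. Conversely, while $s \le \alpha$ the loop does no harm since all iterations only improve $u$ monotonically. The case $\alpha \ge n/2$ is handled by the ``$u \ge n/2$'' branch that invokes the $e^{O(n^2 \log n)}$ algorithm of \Cref{thm:FPT-n}; since $n \le 2\alpha$, this is FPT in $\alpha$.

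The main correctness argument is the color-coding step. Fix an optimum partial allocation $A^*$ with value $\alpha$ and let $S^*, T^*, s^*, t^*$ be as defined before the algorithm. By the splitter property of $\mathcal{X}$, there exists $\chi \in \mathcal{X}$ that is injective on $S^*$; and since $|\mathcal{P}| = t^{s}$ enumerates every function $[s] \to [t]$, there exists $\psi \in \mathcal{P}$ such that two colored cells of $S^*$ map to the same element of $[t]$ under $\psi \circ \chi$ iff they are assigned to the same agent in $A^*$. For this ``good'' pair $(\chi,\psi)$, each class $(\psi \circ \chi)^{-1}(\ell)$ contains the bundle of exactly one agent, so the optimum decomposes: for each agent $i$, the best bundle restricted to color class $\ell$ is precisely $Q_{i\ell}$ from \eqref{eq:Qil}, which is a maximum-weight bipartite matching in the complete bipartite graph on $(M,R)$ restricted to $(\psi\circ\chi)^{-1}(\ell)$ and is computable in polynomial time. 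A maximum-weight matching between agents $N$ and color classes $[t]$ with edge weights $v_i(Q_{i\ell})$ then yields an allocation whose value equals $\alpha$, proving that the incumbent reaches $\alpha$.

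For the complete case, I would use the partial allocation $A^*$ returned by the inner loop and feed it to \Cref{alg:extend}. The key observation is that when $u < n/2$ we have $|\{j : A^*(j,k) \ne \bot\}| \le u < n/2$ and likewise for rounds (since each positive-value triple uses a distinct row and a distinct column of the Latin square grid, and the support size is $u$), so the precondition $|M'| + |R'| \le n$ of \Cref{lem:extend} holds, and the extension adds only cells of value $0$ (no value is lost). Combined with the high-value branch $u \ge n/2$, this yields an optimal complete allocation, giving the complete LSA variant.

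Finally, for the running time, the outer loop runs up to $s \le \alpha+1$, the inner loop up to $t \le s$. For each $(s,t)$ pair, $|\mathcal{P}| \le \alpha^\alpha$ and $|\mathcal{X}| \le e^{O(\alpha \log^2 \alpha)} \log n$, and the work per pair is polynomial in $n$ (computing $t$ matchings of size at most $n$ and one matching between $N$ and $[t]$). Thus the total running time is $e^{O(\alpha \log^2 \alpha)} \cdot \mathrm{poly}(n)$, which is FPT in $\alpha$. The most delicate point I expect is the decomposition argument: carefully justifying that a good $(\chi,\psi)$ exists in $\mathcal{X} \times \mathcal{P}$ and that conditional on a good pair the matching relaxation is tight, since only at a good pair does each class $(\psi\circ\chi)^{-1}(\ell)$ belong to a single agent in the optimum.
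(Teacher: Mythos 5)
Your proposal is correct and follows essentially the same route as the paper: the paper's own proof of this theorem only verifies the running-time bounds of \Cref{alg:FPT} (deferring correctness to the discussion preceding the theorem), and your correctness argument for the good splitter/map pair, the matching step, and the extension via \Cref{lem:extend} matches that discussion. The only minor imprecision is your closing running-time summary of $e^{O(\alpha\log^2\alpha)}\cdot\mathrm{poly}(n)$, which omits the $e^{O(\alpha^2\log\alpha)}$ cost of the brute-force branch invoked when $u\ge n/2$; that branch is still FPT in $\alpha$ (as you note earlier), so nothing breaks.
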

\begin{proof}
It is sufficient to prove that the time complexity of \cref{alg:FPT} is FPT with respect to $\alpha$.
The computational complexity of each iteration for $s$ and $t$ is at most 
\[e^{O(s\log^2 s)}\cdot n^2\log n + e^{O(s\log^2 s)}\cdot O(n^3) = e^{O(s\log^2 s)}\cdot n^3.\]
If $\alpha<n/2$, then the total computational time is at most $\alpha^2\cdot e^{O(\alpha\log^2 \alpha)}\cdot n^3+O(n^3)$ by \Cref{lem:extend}, which is FPT.
If $\alpha\ge n/2$, then the total computational time is at most 
\[(n/2)^2\cdot e^{O(n\log^2 n)}\cdot n^3+e^{O(n^2\log n)}=e^{O(\alpha^2\log \alpha)}\] 
by \Cref{thm:FPT-n}, which is also FPT.
\end{proof}

\section{NP-hardness}\label{sec:NP-hard}
In this section, we present various results on the NP-hardness of finding desirable allocations.
Due to space limitations, some proofs are deferred to the appendix.

We begin by addressing the hardness of the binary case.
\begin{theorem}\label{thm:Umax-hard}
When the valuations are binary, deciding whether there exists a complete allocation $A$ such that $v_{ijk}=1$ for all $(i,j,k)\in A$ is strongly NP-complete.
\end{theorem}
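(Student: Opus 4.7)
My proof proposal is to establish strong NP-completeness via a reduction from the \emph{partial Latin square completion} problem, which Colbourn~\cite{colbourn1984complexity} showed is NP-complete (and, because the entries of a partial Latin square of order $n$ come from $[n]\cup\{\bot\}$, the hardness is strong).

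Membership in NP is immediate: a complete allocation $A\subseteq N\times M\times R$ is a polynomial-size certificate, and feasibility together with the property $v_{ijk}=1$ for all $(i,j,k)\in A$ can be checked in $O(n^2)$ time by scanning the rows, columns, and cells.

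For the hardness reduction, given a partial Latin square $P$ of order $n$ with symbols from $[n]$, construct an LSA instance with $N=M=R=[n]$ and binary valuations
\[
v_{ijk}=\begin{cases}1 & \text{if } P(j,k)=\bot \text{ or } P(j,k)=i,\\ 0 & \text{otherwise}.\end{cases}
\]
The key claim is that $P$ admits a completion to a Latin square if and only if the constructed instance has a complete allocation $A$ with $v_{ijk}=1$ for every $(i,j,k)\in A$. The forward direction is direct: a completion $L$ yields the allocation $A=\{(L(j,k),j,k)\mid j,k\in[n]\}$, which is complete by the Latin square property, and $v_{L(j,k),j,k}=1$ because either $P(j,k)=\bot$ or $P(j,k)=L(j,k)$. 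For the reverse direction, observe that the three feasibility conditions of a complete allocation match exactly the defining constraints of a Latin square (each row $j$ contains every symbol once, each column $k$ contains every symbol once, every cell is filled); moreover, at any cell $(j,k)$ with $P(j,k)\ne\bot$, the constraint $v_{A(j,k),j,k}=1$ forces $A(j,k)=P(j,k)$, so $A$ extends $P$.

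The reduction is polynomial-time and maps integers in $[n]\cup\{\bot\}$ to valuations in $\{0,1\}$, so the size of the constructed instance is polynomial in the unary size of the input; consequently the strong NP-completeness of partial Latin square completion transfers directly. I anticipate no technical obstacle beyond carefully stating the correspondence between the three feasibility conditions of the LSA problem and the Latin square row/column constraints; the only subtlety is making sure that $P(j,k)\ne\bot$ genuinely pins down $A(j,k)$, which follows because the only $i$ with $v_{ijk}=1$ is $i=P(j,k)$.
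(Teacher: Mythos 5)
Your proposal is correct and follows essentially the same approach as the paper: a reduction from partial Latin square completion with valuations $v_{ijk}=1$ exactly when $P(j,k)\in\{i,\bot\}$, and the same two-directional correspondence between completions and all-ones complete allocations. No gaps.
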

\begin{proof}
It is clear that the problem is in the class NP.
We present a polynomial-time reduction from a partial Latin square problem, which is known to be strongly NP-complete~\cite{colbourn1984complexity}.
In the partial Latin square completion problem, we are given a partial Latin square $P\subseteq N\times M\times R$, and the goal is to check whether it can be extended to a complete Latin square. 

We construct valuations from the given partial Latin square $P$ as follows:
\begin{align*}
v_{ijk}&=
\begin{cases}
    0 & \text{if }P(j,k)\in N\setminus\{i\},\\
    1 & \text{if }P(j,k)\in \{i,\bot\}
\end{cases}&(i\in N,\ j\in M,\ k\in R).
\end{align*}
Clearly, the valuations are binary.

Suppose that $P$ can be extended to a complete Latin square $\overline{P}\supseteq P$.
Then, by interpreting $\overline{P}$ as a complete allocation, we have $v_{ijk}=1$ for all $(i,j,k)\in \overline{P}$ because $\overline{P}(j,k)=i$ implies $P(j,k)\in\{i,\bot\}$.

Conversely, suppose that there exists a complete allocation $A$ such that $v_{ijk}=1$ for all $(i,j,k)\in A$. Then, by the definition of the valuation, we have $A(j,k)=i$ if $P(j,k)=i$. This means that $A$ is a complete Latin square that extends $P$.

Therefore, determining whether a complete allocation $A$ with $v_{ijk}=1$ for all $(i,j,k)\in A$ exists is strongly NP-complete.
\end{proof}

From this theorem, we can conclude that computing a Umax or Emax allocation is NP-hard, even when the valuations are binary.
Indeed, for the binary case, deciding whether there exists an allocation whose utilitarian social welfare is $n^2$ is NP-complete, and deciding whether there exists an allocation whose egalitarian social welfare is $n$ is NP-complete.

Moreover, we can also conclude that computing a \emph{non-wastefulness} allocation and a \emph{Pareto optimal} allocation are both NP-hard.
Here, an allocation $A$ is said to be non-wasteful if, for each $(j,k)\in M\times R$ where $v_{i'jk}>0$ for some $i'\in N$, there exists an agent $i\in N$ such that $v_{ijk}>0$ and $A(j,k)=i$.
Additionally, an allocation $A$ is said to be \emph{Pareto optimal} if there exists no allocation $B$ such that $v_i(B_i)\ge v_i(A_i)$ for every agent $i\in N$, with at least one of the inequalities being strict.
\begin{corollary}\label{cor:PO}
Even when the valuations are binary, computing a partial or complete allocation that satisfies each of Umax, Emax, non-wastefulness, or Pareto optimal is strongly NP-hard.
\end{corollary}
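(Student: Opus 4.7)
The plan is to reuse the reduction from partial Latin square completion constructed in the proof of \Cref{thm:Umax-hard}, in which binary valuations are set as $v_{ijk}=0$ if $P(j,k)\in N\setminus\{i\}$ and $v_{ijk}=1$ otherwise. Under this instance I will argue, for each of the four properties, that a polynomial-time algorithm that computes an allocation satisfying the property yields a polynomial-time decision procedure for completability of $P$, which is strongly NP-hard.

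For Umax in both the partial and complete settings, I would use that the utilitarian welfare is bounded above by $n^2$, since there are at most $n^2$ cells each contributing at most $1$. Any completion of $P$ (viewed as an allocation) attains $n^2$; conversely, a partial allocation of utility $n^2$ must assign all $n^2$ cells to agents valuing them $1$, and by the construction this is exactly a completion of $P$. Hence the Umax value equals $n^2$ iff $P$ completes, and the same reasoning applies to Emax using the per-agent upper bound $n$ (Emax equals $n$ iff every bundle has size $n$ with all values $1$, again a completion).

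For non-wastefulness I would observe that under the construction every cell $(j,k)$ is valued $1$ by at least one agent (by $P(j,k)$ alone if $P(j,k)\ne\bot$, and by every agent otherwise), so a non-wasteful allocation must assign every cell to an agent valuing it $1$; by the definition of the valuations, such an assignment is precisely a completion of $P$. Thus a non-wasteful allocation exists iff $P$ completes, and computing one directly returns the completion.

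The subtlest case will be Pareto optimality, since a Pareto optimal allocation always exists and one cannot simply test for its existence. My plan is to show that whenever a completion $B$ of $P$ exists, $v_i(B_i)=n$ for every $i\in N$, so that for any allocation $A$ with $\sum_{i\in N} v_i(A_i)<n^2$, the bound $v_i(A_i)\le n=v_i(B_i)$ holds for every $i$ with strict inequality for at least one $i$; hence $B$ Pareto dominates $A$ and every Pareto optimal allocation has utility $n^2$, which by the earlier argument is a completion. If instead $P$ does not complete, every allocation has utility strictly less than $n^2$. So checking whether the output Pareto optimal allocation attains utility $n^2$ decides completability. Beyond this per-agent dominance argument for the Pareto case, the remaining cases are immediate from the upper bounds on welfare combined with the structural equivalence between achieving the bound and completing $P$.
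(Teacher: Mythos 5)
Your proposal is correct and follows essentially the same route as the paper, which derives the corollary from the reduction in Theorem~\ref{thm:Umax-hard} by observing that the Umax value is $n^2$ (resp.\ Emax value is $n$) iff $P$ completes, and that non-wasteful and Pareto optimal allocations must likewise be completions. Your explicit Pareto-dominance argument (every agent attains the per-agent cap $n$ under a completion, so any allocation of total welfare below $n^2$ is dominated) correctly fills in the one step the paper leaves implicit.
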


Next, we show that deciding whether the Emax value is $1$ or $2$ is NP-hard.
This suggests that there is no FPT algorithm for the Emax LSA problem with respect to the optimum value.
\begin{theorem}\label{thm:Emax-hard}
    Even when the valuations are binary, deciding whether the Emax value is $1$ or $2$ is strongly NP-hard for both the partial and complete LSA problems.
Furthermore, deciding whether the Umax value is at least $2n$ or less is strongly NP-hard.
\end{theorem}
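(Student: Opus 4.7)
The plan is to establish NP-hardness by a polynomial-time reduction from a suitable NP-hard source problem, structured to simultaneously give the Emax $1$-vs-$2$ gap (partial and complete) and the Umax $\geq 2n$ vs $< 2n$ gap. NP-membership for all three decision problems is immediate, since given a candidate allocation we can verify feasibility and compute Emax and Umax in polynomial time.

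For the reduction I would choose a source whose combinatorial structure matches the Latin square constraints — natural candidates include $3$-dimensional matching ($3$DM), a restricted satisfiability variant, or partial Latin square completion (already used in the proof of \Cref{thm:Umax-hard}). Taking $3$DM as my working example: given $(X, Y, Z, T)$ with $|X| = |Y| = |Z| = q$ and $T \subseteq X \times Y \times Z$, I would build an LSA of order $n$ polynomially bounded in $q$ with binary valuations as follows. For each triple $\tau \in T$, place value $1$ on a small fixed set of cells, positioned so that the Latin square feasibility constraints couple cells from conflicting triples (those sharing an element of $X$, $Y$, or $Z$). Each element of $X \cup Y \cup Z$ is assigned an ``element-agent'' that values exactly the cells associated with its incident triples. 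Correctness will then run as: a perfect matching $\mathcal{M} \subseteq T$ yields a feasible allocation giving every element-agent utility at least $2$; conversely, absence of a perfect matching forces at least one element-agent to receive utility at most $1$.

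For the complete setting I would pad the construction with additional buffer agents, items, and rounds valued at zero everywhere, and invoke \Cref{lem:extend} (Ryser's theorem) to extend any partial allocation on the core gadget to a complete Latin square without affecting the Emax of the valued agents. For the Umax hardness, I would additionally engineer the construction so that $\sum_{i,j,k} v_{ijk} = 2n$ and so that the only feasible way to realize utilitarian welfare $2n$ is to give each agent utility exactly $2$; under this arrangement, Umax $\geq 2n$ becomes equivalent to Emax $\geq 2$, and the hardness transfers directly.

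The main obstacle throughout is the gadget design. The Latin square feasibility constraints differ subtly from the matching constraints of $3$DM (or the clause-satisfaction constraints of $3$-SAT), so the pairs of value-$1$ cells must be coupled tightly enough that spurious allocations — feasible, high-Emax allocations not corresponding to a genuine source-instance solution — are ruled out. Maintaining this tight correspondence while also keeping the padded complete-case construction and the tight-sum Umax arrangement free of loopholes is where most of the effort will be concentrated.
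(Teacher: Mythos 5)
Your high-level architecture matches the paper's: a gap reduction from an NP-hard problem in which every agent can feasibly obtain utility at most $2$, so that Emax $\geq 2$, Umax $\geq 2n$, and the existence of a source-instance solution all coincide; padding with dummy agents and invoking \Cref{lem:extend} for the complete case. However, the proposal has a genuine gap, and you name it yourself: the gadget is never constructed. Everything that makes the theorem true lives in that gadget, and "place value $1$ on a small fixed set of cells, positioned so that the Latin square feasibility constraints couple cells from conflicting triples" is a restatement of the goal, not a construction. In particular, you do not specify how many agents the reduced instance has, what the non-element agents value, or why every one of the $n$ agents (not just the $3q$ element-agents) can reach utility $2$ exactly when a perfect matching exists --- and for the partial Emax problem, Emax $\geq 2$ requires \emph{every} agent to reach $2$. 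A further difficulty with $3$DM specifically is that an element may lie in arbitrarily many triples, so there is no natural bounded ``budget'' of valued cells per element-agent; without such a bound it is hard to cap each agent's achievable utility at $2$, which your Umax argument needs.

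The paper resolves exactly these issues by reducing instead from $4$-occurrence $3$SAT (each literal occurs exactly twice), which bounds every agent's valued cells. It uses three agent types: each variable agent values a $2\times 2$ block and can only take one of its two diagonals (the Latin square row/column constraints forbid more), which encodes the truth value; four transfer agents per variable each value one cell of that block, one guaranteed ``top'' cell, and one ``bottom'' cell in a clause row, so that the variable agent's diagonal choice forces half the transfer agents onto their bottom cells; and each clause agent values one top cell plus the three bottom cells of the transfer agents for its literals (all in one row, so at most one is obtainable), reaching utility $2$ iff some literal is true. The per-agent cap of $2$ then gives the Umax claim. If you want to salvage your $3$DM route you would need to replicate this kind of occurrence-bounding and per-agent budget control, which is precisely the part of the proof still to be done.
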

\begin{proof}
We first consider the partial setting.
We reduce from \emph{4-occurrence-3SAT} (2L-OCC-3SAT). This version of 3SAT is where each literal, both positive and negative, occurs exactly twice in the clauses. Thus, each variable occurs four times in the clauses. 2L-OCC-3SAT, and even its monotone version, are known to be NP-hard \cite{darmann2021simplified}. 

We consider such a 3SAT formula $\theta$ on $\lambda$ variables $x_1, \ldots, x_\lambda$ and $\mu$ clauses $C_1, \ldots, C_\mu$. We construct an instance of the partial LSA problem as follows. Note that $3\mu=4\lambda$.

We construct three kinds of agents: variable agents, transfer agents, and clause agents. 
For each variable $x_k$ (or, clause $C_l$), we have a variable agent (respectively, clause agent) with the same name. Furthermore, for each variable $x_k$, we have four transfer agents $t_k^1,t_k^2,t_k^3,t_k^4$.
Thus, $N=\{C_1,\dots,C_\mu\}\cup\bigcup_{k\in[\lambda]}\{x_k,t_k^1,t_k^2,t_k^3,t_k^4\}$ and $n=\mu+5\lambda=19\lambda/3$.
Let $M=R=[n]$.
We write $a(j,k)$ to denote the valuation of cell $(j,k)\in M\times R$ for agent $a\in N$. 

For each $k\in[\lambda]$, we set $x_k(2k-1, 2k-1) = x_k(2k-1, 2k) = x_k(2k, 2k-1) = x_k(2k, 2k) = 1$, and all other cells are valued at $0$ for $x_k$.
We also set $t_k^1(2k-1, 2k-1) = t_k^2(2k-1, 2k) = t_k^3(2k, 2k) = t_k^4(2k, 2k-1) = 1$. Suppose that the positive literal of $x_k$ occurs in the clauses $C_l$ and $C_p$, where $l<p$. Then we set $t_k^1(2\lambda + l, 2k-1) = t_k^3(2\lambda + p, 2k) = 1$. Suppose that the negative literal of $x_k$ occurs in the clauses 
$C_l$ and $C_p$, where $l<p$. 
Then we set $t_k^2(2\lambda + l, 2k) = t_k^4(2\lambda + p, 2k-1) = 1$. 
Such cells that occur below the $2\lambda^{th}$ row and give value $1$ for some transfer agent, we call the \emph{bottom cells} of their respective transfer agents.
We also set $t_k^v(1, 4k - 4 + v)= 1$ for each transfer agent $t_k^v$, for each $k>1$. 
Note that these cells in the first row do not share the same rows or columns as the other cells that give a value of $1$ for a transfer agent. Thus, these can be allocated to the transfer agents irrespective of the other valued cells that they get.
We set ${t_1^1(2,3)= t_1^2(2,4)=t_1^3(1,3)=t_1^4(1,4)}=1$. We call these the \emph{top} cells of the transfer agents.
All the other cells are evaluated at $0$ for the transfer agents.
\begin{table*}[htbp]
\centering
\caption{Reduced instance of the monotone 4-occurrence 3SAT problem {$(x_1\vee x_2\vee x_3)
\wedge (x_1\vee x_2\vee x_4)  
\wedge (x_3\vee x_5\vee x_6)
\wedge (x_4\vee x_5\vee x_6)
\wedge
(\overline{x}_1\vee \overline{x}_2\vee \overline{x}_3)
\wedge (\overline{x}_1\vee \overline{x}_2\vee \overline{x}_4)  
\wedge (\overline{x}_3\vee \overline{x}_5\vee \overline{x}_6)
\wedge (\overline{x}_4\vee \overline{x}_5\vee \overline{x}_6)
$}. 
Each cell displays the agents that evaluate it to $1$. If it is $\emptyset$, then the cell is evaluated to $0$ for all agents. 
The allocation represented by red color corresponds to a truth assignment of $(x_1,x_2,x_3,x_4,x_5,x_6)=(\true,\false,\true, \true, \true, \false)$.}\label{tab:fixed-EF1-hard}
\tabcolsep = 2mm
\renewcommand{\arraystretch}{1.4}
\newcommand{\myline}{\hhline{~---------------------}}
\scalebox{.55}{
\begin{tabular}{S|c|c|c|c|c|c|c|c|c|c|c|c|c|c|c|c|c|c|c|c|c|}
\noline{}            &\noline{1}   & \noline{2} & \noline{3}  & \noline{4} & \noline{5} &\noline{6}& \noline{7} & \noline{8} & \noline{9} & \noline{10} & \noline{11}   & \noline{12} & \noline{13} & \noline{$\cdots$} & \noline{24}& \noline{25} & \noline{$\cdots$} & \noline{32} & \noline{33} & \noline{$\cdots$} & \noline{38} \\\myline
1       &$x_1,\alert{t_1^1}$& $\alert{x_1},t_1^2$     &$\alert{t_1^3}$& $\alert{t_1^4}$   
&$\alert{t_2^1}$& $\alert{t_2^2}$    &$\alert{t_2^3}$& $\alert{t_2^4}$    &$\alert{t_3^1}$& $\alert{t_3^2}$ &$\alert{t_3^3}$& $\alert{t_3^4}$&$\alert{t_4^1}$&$\alert{\cdots}$&$\alert{t_6^4}$&$\alert{C_1}$&$\alert{\cdots}$&$\alert{C_8}$&$\emptyset$&$\cdots$&$\emptyset$\\\myline
2       &$\alert{x_1}, t_1^4$& $x_1,\alert{t_1^3}$     &$\alert{t_1^1}$& $\alert{t_1^2}$     &$\emptyset$& $\emptyset$     &$\emptyset$& $\emptyset$    &$\emptyset$& $\emptyset$ &$\emptyset$& $\emptyset$ &$\emptyset$&$\cdots$&$\emptyset$&$\emptyset$&$\cdots$&$\emptyset$&$\emptyset$&$\cdots$&$\emptyset$\\\myline
3      &$\emptyset$& $\emptyset$     &$\alert{x_2},t_2^1$& $x_2,\alert{t_2^2}$     &$\emptyset$& $\emptyset$    &$\emptyset$& $\emptyset$    &$\emptyset$& $\emptyset$ &$\emptyset$& $\emptyset$&$\emptyset$&$\cdots$&$\emptyset$&$\emptyset$&$\cdots$&$\emptyset$&$\emptyset$&$\cdots$&$\emptyset$\\\myline
4      &$\emptyset$& $\emptyset$     &$x_2,\alert{t_2^4}$& $\alert{x_2},t_2^3$     &$\emptyset$& $\emptyset$     &$\emptyset$& $\emptyset$   &$\emptyset$& $\emptyset$ &$\emptyset$& $\emptyset$&$\emptyset$&$\cdots$&$\emptyset$&$\emptyset$&$\cdots$&$\emptyset$&$\emptyset$&$\cdots$&$\emptyset$\\\myline
5 &$\emptyset$& $\emptyset$     &$\emptyset$& $\emptyset$     &$x_3,\alert{t_3^1}$& $\alert{x_3},t_3^2$     &$\emptyset$& $\emptyset$    &$\emptyset$& $\emptyset$ &$\emptyset$& $\emptyset$&$\emptyset$&$\cdots$&$\emptyset$&$\emptyset$&$\cdots$&$\emptyset$&$\emptyset$&$\cdots$&$\emptyset$\\\myline
6       &$\emptyset$& $\emptyset$    & $\emptyset$& $\emptyset$     &$\alert{x_3},t_3^4$& $x_3,\alert{t_3^3}$    &$\emptyset$& $\emptyset$    &$\emptyset$& $\emptyset$ &$\emptyset$& $\emptyset$&$\emptyset$&$\cdots$&$\emptyset$&$\emptyset$&$\cdots$&$\emptyset$&$\emptyset$&$\cdots$&$\emptyset$\\\myline
7      &$\emptyset$& $\emptyset$    &$\emptyset$& $\emptyset$     &$\emptyset$& $\emptyset$     &$x_4,\alert{t_4^1}$& $\alert{x_4},t_4^2$    &$\emptyset$& $\emptyset$ &$\emptyset$& $\emptyset$&$\emptyset$&$\cdots$&$\emptyset$&$\emptyset$&$\cdots$&$\emptyset$&$\emptyset$&$\cdots$&$\emptyset$\\\myline
8 &$\emptyset$& $\emptyset$     &$\emptyset$& $\emptyset$     &$\emptyset$& $\emptyset$     &$\alert{x_4},t_4^4$& $x_4,\alert{t_4^3}$    &$\emptyset$& $\emptyset$ &$\emptyset$& $\emptyset$&$\emptyset$&$\cdots$&$\emptyset$&$\emptyset$&$\cdots$&$\emptyset$&$\emptyset$&$\cdots$&$\emptyset$\\\myline
9      &$\emptyset$& $\emptyset$     &$\emptyset$& $\emptyset$     &$\emptyset$& $\emptyset$    &$\emptyset$& $\emptyset$   &$x_5,\alert{t_5^1}$& $\alert{x_5},t_5^2$ &$\emptyset$& $\emptyset$&$\emptyset$&$\cdots$&$\emptyset$&$\emptyset$&$\cdots$&$\emptyset$&$\emptyset$&$\cdots$&$\emptyset$\\\myline
10      &$\emptyset$& $\emptyset$    &$\emptyset$& $\emptyset$     &$\emptyset$& $\emptyset$     &$\emptyset$& $\emptyset$    &$\alert{x_5},t_5^4$& $x_5,\alert{t_5^3}$ &$\emptyset$& $\emptyset$&$\emptyset$&$\cdots$&$\emptyset$&$\emptyset$&$\cdots$&$\emptyset$&$\emptyset$&$\cdots$&$\emptyset$\\\myline
11 &$\emptyset$& $\emptyset$     &$\emptyset$& $\emptyset$     &$\emptyset$& $\emptyset$    &$\emptyset$& $\emptyset$    &$\emptyset$& $\emptyset$ &$\alert{x_6},t_6^1$& $x_6,\alert{t_6^2}$&$\emptyset$&$\cdots$&$\emptyset$&$\emptyset$&$\cdots$&$\emptyset$&$\emptyset$&$\cdots$&$\emptyset$\\\myline
12      &$\emptyset$& $\emptyset$    &$\emptyset$& $\emptyset$     &$\emptyset$& $\emptyset$     &$\emptyset$& $\emptyset$    &$\emptyset$& $\emptyset$ &$x_6,\alert{t_6^4}$& $\alert{x_6},t_6^3$&$\emptyset$&$\cdots$&$\emptyset$&$\emptyset$&$\cdots$&$\emptyset$&$\emptyset$&$\cdots$&$\emptyset$\\\myline

13       &$t_1^1, \alert{C_1}$& $\emptyset$     &$\alert{t_2^1}, C_1$& $\emptyset$     & $t_3^1, C_1$& $\emptyset$     &$\emptyset$&$\emptyset$    &$\emptyset$& $\emptyset$ &$\emptyset$& $\emptyset$&$\emptyset$&$\cdots$&$\emptyset$&$\emptyset$&$\cdots$&$\emptyset$&$\emptyset$&$\cdots$&$\emptyset$\\\myline
14 &$\emptyset$& $t_1^3, \alert{C_2}$     &$\emptyset$& $\alert{t_2^3}, C_2$     &$\emptyset$& $\emptyset$     &$t_4^1, C_2$& $\emptyset$    &$\emptyset$& $\emptyset$ &$\emptyset$& $\emptyset$&$\emptyset$&$\cdots$&$\emptyset$&$\emptyset$&$\cdots$&$\emptyset$&$\emptyset$&$\cdots$&$\emptyset$\\\myline
15       &$\emptyset$& $\emptyset$     &$\emptyset$& $\emptyset$     & $\emptyset$ & $t_3^3, C_3$     &$\emptyset$& $\emptyset$   &$t_5^1, \alert{C_3}$& $\emptyset$ &$\alert{t_6^1}, C_3$& $\emptyset$&$\emptyset$&$\cdots$&$\emptyset$&$\emptyset$&$\cdots$&$\emptyset$&$\emptyset$&$\cdots$&$\emptyset$\\\myline
16 & $\emptyset$    & $\emptyset$     & $\emptyset$ & $\emptyset$     & $\emptyset$ & $\emptyset$     & $\emptyset$   & $t_4^3, C_4$      & $\emptyset$  & $t_5^3, \alert{C_4}$   & $\emptyset$  & $\alert{t_6^3}, C_4$&$\emptyset$&$\cdots$&$\emptyset$&$\emptyset$&$\cdots$&$\emptyset$&$\emptyset$&$\cdots$&$\emptyset$\\\myline

17    & $\emptyset$   &$\alert{t_1^2}, C_5$ &$\emptyset$    &$t_2^2, \alert{C_5}$& $\emptyset$     & $\alert{t_3^2}, C_5$& $\emptyset$&$\emptyset$    &$\emptyset$& $\emptyset$ &$\emptyset$& $\emptyset$&$\emptyset$&$\cdots$&$\emptyset$&$\emptyset$&$\cdots$&$\emptyset$&$\emptyset$&$\cdots$&$\emptyset$\\\myline
18& $\alert{t_1^4}, C_6$     &$\emptyset$& $t_2^4, \alert{C_6}$ &$\emptyset$&$\emptyset$&$\emptyset$ &$\emptyset$         &$\alert{t_4^2}, C_6$& $\emptyset$     &$\emptyset$& $\emptyset$ & $\emptyset$&$\emptyset$&$\cdots$&$\emptyset$&$\emptyset$&$\cdots$&$\emptyset$&$\emptyset$&$\cdots$&$\emptyset$\\\myline
19     &$\emptyset$& $\emptyset$     &$\emptyset$& $\emptyset$      & $\alert{t_3^4}, C_7$     &$\emptyset$& $\emptyset$& $\emptyset$  &$\emptyset$ &$\alert{t_5^2}, C_7$& $\emptyset$ &$t_6^2, \alert{C_7}$&$\emptyset$&$\cdots$&$\emptyset$&$\emptyset$&$\cdots$&$\emptyset$&$\emptyset$&$\cdots$&$\emptyset$\\\myline
20& $\emptyset$    & $\emptyset$     & $\emptyset$ & $\emptyset$     & $\emptyset$ & $\emptyset$      & $\alert{t_4^4}, C_8$ &$\emptyset$       & $\alert{t_5^4}, C_8$   & $\emptyset$  & $t_6^4, \alert{C_8}$ &$\emptyset$ &$\emptyset$&$\cdots$&$\emptyset$&$\emptyset$&$\cdots$&$\emptyset$&$\emptyset$&$\cdots$&$\emptyset$\\\myline

21& $\emptyset$    & $\emptyset$     & $\emptyset$ & $\emptyset$     & $\emptyset$ & $\emptyset$      & $\emptyset$ &$\emptyset$       &$\emptyset$& $\emptyset$  & $\emptyset$&$\emptyset$ &$\emptyset$&$\cdots$&$\emptyset$&$\emptyset$&$\cdots$&$\emptyset$&$\emptyset$&$\cdots$&$\emptyset$\\\myline
$\vdots$& $\vdots$    & $\vdots$     & $\vdots$ & $\vdots$     & $\vdots$ & $\vdots$      & $\vdots$ &$\vdots$       &$\vdots$& $\vdots$  & $\vdots$&$\vdots$ &$\vdots$&$\vdots$&$\vdots$&$\vdots$&$\vdots$&$\vdots$&$\vdots$&$\vdots$&$\vdots$\\\myline
38& $\emptyset$    & $\emptyset$     & $\emptyset$ & $\emptyset$     & $\emptyset$ & $\emptyset$      & $\emptyset$ &$\emptyset$       &$\emptyset$& $\emptyset$  & $\emptyset$&$\emptyset$ &$\emptyset$&$\cdots$&$\emptyset$&$\emptyset$&$\cdots$&$\emptyset$&$\emptyset$&$\cdots$&$\emptyset$\\\myline
\end{tabular}
}
\end{table*}

Finally, for each clause $C_p$, we set $C_p(2\lambda+ p, c_1) = C_p(2\lambda+ p, c_2) = C_p(2\lambda+ p, c_3) = 1$ where the cells $(2\lambda+p, c_1)$, $(2\lambda+p, c_2)$ and $(2\lambda+p, c_3)$ are already set to value $1$ for the transfer agents of the literals in $C_p$. We also set $C_p(1, 4\lambda + p)= 1$.  These are the \emph{top} cells of the clause agents. All the other cells evaluate to $0$ for the clause agents. 

Now, we show that if $\theta$ is satisfiable, then the Emax value for the reduced LSA instance is $2$. Consider a satisfying assignment of $\theta$. 
First, allocate all the top cells to their respective transfer and clause agents. 
Now, if $x_k$ is assigned true, we allocate the cells $(2k-1, 2k-1)$ and $(2k, 2k)$ to the variable agent $x_k$, and to the transfer agents $t_k^2$ and $t_k^4$, we allocate their bottom cells. To the transfer agents $t_k^1$ and $t_k^3$, we allocate the cells $(2k-1, 2k)$ and $(2k-1, 2k)$.
Else if $x_k$ is assigned $0$, we allocate the cells $(2k-1, 2k)$ and $(2k-1, 2k)$ to the variable agent $x_k$, and to the transfer agents $t_k^1$ and $t_k^3$ we allocate their bottom cells. To the transfer agents $t_k^2$ and $t_k^4$, we allocate the cells $(2k-1, 2k-1)$ and $(2k, 2k)$. 
\cref{tab:fixed-EF1-hard} depicts an example of this allocation.
Since each clause has at least one literal that is assigned to true, at least one bottom cell of a transfer agent must remain unallocated for each clause. We allocate that cell to the clause agent. Thus, each agent gets two cells worth $1$ each.

For the other direction, suppose that there is an allocation for the reduced LSA instance such that the egalitarian social welfare is (at least) $2$. Then, each variable agent $x_k$ must be allocated either ``$(2k-1, 2k)$ and $(2k, 2k-1)$'' or ``$(2k-1, 2k-1)$ and $(2k, 2k)$''. We assign variable $x_k$ to true if and only if agent $x_k$ is allocated $(2k-1, 2k)$ and $(2k, 2k-1)$.
Accordingly, the corresponding transfer agents must be allocated their bottom cells. For a clause agent to have two valued items, it must be allocated a bottom cell of one of the three transfer agents corresponding to its literals. Then, one literal in the clause must have been assigned true. 

It can also be seen that the Umax value is (at least) $2n=38 \lambda /3$ if and only if $\theta$ is satisfiable.  

For the complete settings, the reduction is similar, but we introduce $19\lambda/3$ dummy agents. The matrix also doubles in length and breadth. We set the valuations corresponding to the 3SAT formula as above. Also, for the dummy agents, all the cells of the first two rows give a value of 1. All agents value the rest of the cells at $0$.
Then, by \Cref{lem:extend}, the above allocation can be completed.
\end{proof}

Moreover, the approximation of the Emax LSA problem seems difficult because an $\alpha$-approximation algorithm for the Emax LSA problem implies an $\alpha$-approximation algorithm for the max-min fair allocation problem.
The max-min fair allocation is a problem of maximizing egalitarian welfare when allocating $m$ items to $n$ agents with additive valuations, without any constraints.
The best-known approximation algorithm for the max-min fair allocation problem is $\tilde{O}(m^\epsilon)$-approximation with a running time of $O(m^{1/\epsilon})$, where $m$ is the number of items~\cite{chakrabarty2009allocating}.
\begin{theorem}\label{thm:Emax-reduce}
    There exists an approximation-preserving reduction from the max-min fair allocation problem to the Emax LSA problem.
\end{theorem}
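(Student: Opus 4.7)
The plan is to reduce a max-min fair allocation instance $(n, m, w_{ij})$ to an instance of Emax LSA so that the optima coincide. I would first pad items with zero-valued ones to assume without loss of generality that $n \le m$, and set the LSA order $N := m$, designating the first $n$ agents as \emph{real} (corresponding to the max-min fair agents) and the remaining $m - n$ as \emph{dummies}. For each real $i \in [n]$ and item $j \in [m]$, set $v_{i,j,j} := w_{ij}$ on the diagonal cell and $v_{i,j,k} := 0$ for $k \ne j$. For each dummy $d \in \{n+1, \dots, m\}$, pick a distinct off-diagonal \emph{private} cell $(\alpha_d, \beta_d)$ (for instance $(d, d-1)$) and set $v_{d, \alpha_d, \beta_d} := V$ with $V := 1 + \sum_{i,j} w_{ij}$, and $v_{d, j, k} := 0$ otherwise.

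The equivalence $\OPT_{\mathrm{LSA}} = \OPT_{\mathrm{MMF}}$ then follows from two symmetric observations. Given any max-min fair partition $\{B_i\}_{i \in [n]}$, the assignment $A(j,j) := i$ for $j \in B_i$ together with $A(\alpha_d, \beta_d) := d$ for every dummy is a feasible partial LSA allocation (every specified cell lies in a distinct row-column pair, and no agent repeats within a row or column) in which real agent $i$ attains $\sum_{j \in B_i} w_{ij}$ and every dummy attains $V$. Since $V$ strictly exceeds every possible real-agent total, the Emax of this allocation equals the max-min fair value of $B$. Conversely, any LSA allocation with Emax at least $\gamma \cdot \OPT_{\mathrm{LSA}}$ induces the partition $B_i := \{j \in [m] : A(j, j) = i\}$ (completed arbitrarily for items $j$ with $A(j,j)$ unassigned or dummy, which only increases the minimum) whose max-min fair value equals the real agents' LSA minimum, which is at least the overall Emax and hence at least $\gamma \cdot \OPT_{\mathrm{MMF}}$. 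Therefore the reduction is approximation-preserving.

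The main obstacle is extending the argument to the complete LSA variant, since the above partial allocation is neither rectangular nor does it directly satisfy the hypothesis of the Ryser-style extension in \Cref{lem:extend}. I would handle this by enlarging $N$ (for example to $2m$) and placing the valued cells into a corner sub-block, together with analogous private cells for the additional padding agents introduced by the enlargement, so that \Cref{lem:extend} can fill the remaining cells with zero-valued, value-neutral assignments without affecting the Emax. The delicate point is to arrange the private cells to lie in the padding rows and columns so that Ryser's condition is trivially met on a suitable enclosing sub-rectangle.
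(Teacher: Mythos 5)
Your reduction has the same skeleton as the paper's: embed the max-min fair instance on the diagonal of an $[m]\times[m]$ block, add dummy agents to fill out the order, and invoke \Cref{lem:extend} to pass between partial and complete allocations. Your treatment of the partial variant is correct. The genuine gap is in the complete variant, and it is created by your choice of dummy valuations. You give each dummy a single private high-value cell, so the dummy's welfare survives only if that specific cell is pre-assigned to it \emph{before} the extension and the pre-assigned partial allocation still meets the hypothesis of \Cref{lem:extend}, namely $|M'|+|R'|\le$ order. But at order $2m$ the diagonal alone already forces $|M'|=|R'|=m$, i.e.\ $|M'|+|R'|=2m$, which saturates the budget: there is no room to touch any padding row or padding column. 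Your stated plan---``arrange the private cells to lie in the padding rows and columns''---therefore breaks the very condition you need; the extension lemma would no longer apply, and if you instead drop the private cells from the pre-assigned part, the extension (which is value-oblivious) may hand them to other agents, collapsing the Emax to $0$. The repair is either to place all private cells off-diagonally \emph{inside} $[m]\times[m]$ (which keeps $M'=R'=[m]$ but needs $2m-n$ distinct off-diagonal cells there, failing in small edge cases such as $m=2$, $n=1$), or, more cleanly, to abandon private cells altogether.

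The paper's construction avoids this entirely: each dummy agent values \emph{every} cell at $h=\min_i u_i(E)$, which upper-bounds $\OPT_{\mathrm{MMF}}$. Then any complete allocation whatsoever gives each dummy $2m\cdot h\ge h\ge\OPT_{\mathrm{MMF}}$, so the extension of \Cref{lem:extend} can fill the remaining cells arbitrarily without any bookkeeping about which cells the dummies receive, and the pre-assigned partial allocation is exactly the diagonal of $[m]\times[m]$, which satisfies $|M'|+|R'|\le 2m$ with equality. I recommend you adopt that uniform dummy valuation; the rest of your argument (the forward direction from a partition, and the backward direction reading off $B_i=\{j: A(j,j)=i\}$ and discarding the dummies) then goes through verbatim for both the partial and complete problems.
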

\begin{proof}
As the max-min fair allocation problem, suppose that we are given $n$ agents $[n]$, $m$ items $E=\{e_1,\dots,e_m\}$, and additive utility functions $u_i\colon 2^E\to\mathbb{Z}_+$ for $i\in [n]$.
Without loss of generality, we may assume that $m\ge n$, since otherwise egalitarian social welfare is $0$ for every allocation.
Let $h=\min_{i\in[n]}u_i(E)$. Then, the optimum value for the max-min fair allocation problem is at most $h$.
We construct an LSA problem with $N=M=R=[2m]$ by defining valuations $(v_{ijk})_{i\in N,\,j\in M,\,k\in R}$ as follows:
\begin{align*}
    v_{ijk}&=
    \begin{cases}
        u_i(\{e_j\}) &\text{if }i\in [n]\text{ and }j=k\in[m],\\
        h            &\text{if }i\ge n+1,\\
        0            &\text{otherwise}\\
    \end{cases}
    &\hspace{-15mm} (i\in N,\ j\in M,\ k\in R).
\end{align*}
We prove that the optimum value of the max-min fair allocation problem is the same as the Emax value of the reduced LSA problem.

Let $(X_1,\dots,X_n)$ be a partition of $E$.
Then, consider a complete allocation $A\subseteq N\times M\times R$ such that $A(j,j)=i$ if $e_j\in X_i$.
Note that such a complete allocation must exist by \Cref{lem:extend}.
The egalitarian social welfare of $A$ in the LSA problem is $\min_{i\in [n]}u_i(X_i)$ because 
$v_i(A_i)=u_i(X_i)$ for each $i\in[n]$ and 
$v_i(A_i)\ge m\cdot h\ge h$ for each $i\in[2m]\setminus [n]$.

Conversely, let $A\subseteq N\times M\times R$ be a (possibly partial) allocation.
We construct a complete allocation $\overline{A}\subseteq N\times M\times R$ such that $\overline{A}(j,j)=i$ if $A(j,j)=i$ for $i\in[n]$ and $j\in[m]$. Such a complete allocation must exist by \Cref{lem:extend}.
Then, the egalitarian social welfare of $A$ and $\overline{A}$ are $\min_{i\in[n]}v_i(A_i)~(\le h)$ because
$v_{i}(\overline{A}_i)=v_i(A_i)$ for each $i\in[n]$ and
$v_i(\overline{A}_i)\ge m\cdot h\ge h$ for each $i\in[2m]\setminus [n]$.
Let $(X_1,\dots,X_n)$ be a partition of $E$ such that $e_j\in X_i$ if $\overline{A}(j,j)=i$.
Then, $u_i(X_i)\ge v_i(\overline{A}_i)$ for every $i\in[n]$.
\end{proof}

Finally, we prove that it is strongly NP-complete to check the existence of an EF, EQ, PROP, EFX, EQX, or PROPX complete allocation by a reduction from the $3$-Partition problem.
This hardness for the LSA problem holds even when the valuations are identical.
It is worth mentioning that, when the valuations are identical, any complete allocation maximizes utilitarian social welfare.
Moreover, if all items or all rounds are identical, any complete allocation satisfies Umax, Emax, and all other fairness properties.
\begin{theorem}\label{thm:EF}
    Even when the valuations are identical, checking the existence of a complete allocation in an LSA problem that satisfies each of EF, PROP, EQ, EFX, EQX, and PROPX is strongly NP-complete.
Moreover, even when the valuations are identical, checking whether the Emax value of a complete LSA problem is at least a certain value is also strongly NP-complete.
\end{theorem}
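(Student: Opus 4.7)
Membership in NP is immediate: for each of the six fairness properties a proposed complete allocation can be checked in polynomial time, and the Emax value is computed directly from any proposed allocation.

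For hardness, the plan is to reduce from the strongly NP-hard \textsc{3-Partition} problem: given integers $a_1,\ldots,a_{3m}$ with each $a_j\in(B/4,B/2)$ and $\sum_j a_j=mB$, decide whether $[3m]$ admits a partition into $m$ triples, each summing to $B$. I would build an LSA instance with $n=3m$ and identical valuations that places the value $a_j$ on a small set of cells in row $j$, so that every complete allocation assigns each agent a permutation-bundle whose value is determined by a small number of $a_j$'s. A concrete candidate is $v_{ijk}=a_j$ for $j\in[3m]$ and $k\in\{1,2,3\}$, and $v_{ijk}=0$ otherwise; the total welfare is then $3mB$ and the per-agent target for EQ is exactly $B$.

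Under identical valuations, EF, PROP, and EQ collapse to the single condition ``every agent receives exactly $V/n$'', and the Emax decision ``is the Emax value at least $V/n$?'' is equivalent. Because each bundle in the construction contains many zero cells, both the slack term $\min_{(j,k)\in A_{i'}}v_{ijk}$ appearing in EFX and EQX and the slack term $\min_{(j,k)\notin A_i}v_{ijk}$ appearing in PROPX vanish, so these three approximate notions also reduce to exact per-agent value $V/n$. Consequently, a single reduction with identical valuations simultaneously handles all seven decision problems in the statement. The forward direction is straightforward: given a 3-Partition $P_1,\ldots,P_m$, assign three agents per bin by cyclically shifting the bin's items over columns $1$--$3$, and then extend to a full Latin square via \Cref{lem:extend}, which does not affect any agent's value since the filled cells carry value $0$.

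The main obstacle is the converse. An EQ allocation gives three permutations $\pi_1,\pi_2,\pi_3$ of $[3m]$ and hence $3m$ triples $T_i=\{\pi_1(i),\pi_2(i),\pi_3(i)\}$, each summing to $B$, with every index of $[3m]$ appearing in exactly three triples. I would show that such a collection must contain $m$ pairwise-disjoint triples forming a 3-Partition. The bound $a_j\in(B/4,B/2)$ is essential: it forces every $B$-summing triple to consist of three distinct indices and is rigid under small perturbations, so a Hall-type or flow argument on the induced $3$-uniform, $3$-regular hypergraph can extract the required matching. If the direct extraction proves delicate, I would fortify the construction with dummy items, rounds, and/or agents whose feasibility constraints pin down exactly which three real agents share a bin, so that the partition is read off the bundles without further combinatorial work; this is the standard padding used in scheduling-style reductions, and completing it in full detail would be deferred to the appendix mentioned in the section preamble.
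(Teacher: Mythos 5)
Your framing is right in several places that match the paper: membership in NP, the choice of 3-Partition as the source problem, and the observation that under identical valuations (with every bundle containing zero-valued cells) EF, PROP, EQ, EFX, EQX, PROPX and the Emax threshold all collapse to ``every agent receives exactly $V/n$.'' But your concrete construction ($n=3m$ with $v_{ijk}=a_j$ for $k\in\{1,2,3\}$) fails exactly at the point you flag, and that point is the entire content of the reduction. In your instance, an equal-split allocation only yields three permutations $\pi_1,\pi_2,\pi_3$ and hence $3m$ triples $T_i=\{\pi_1(i),\pi_2(i),\pi_3(i)\}$, each summing to $B$, covering each index exactly three times. Dividing by $3$, this is merely a \emph{fractional} perfect matching in the hypergraph of $B$-summing triples, and fractional perfect matchings in $3$-uniform hypergraphs do not in general round to integral ones; neither Hall's theorem nor a flow argument applies to this $3$-dimensional structure, and you give no argument that the window $(B/4,B/2)$ restores integrality. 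So the converse direction is unproven, and without it the reduction could map No-instances of 3-Partition to Yes-instances of your LSA question. Your fallback (``fortify with dummy items, rounds, and/or agents'') is precisely where the real work lies, and it is left unspecified.

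The paper's proof avoids this issue by a different placement of the numbers: it takes $n=6m$ and puts $a_j$ on the \emph{single} diagonal cell $(j,j)$ for $j\le 3m$, so that each $a_j$ is received by exactly one agent in any complete allocation and the sets $S_i=\{j\mid A(j,j)=i\}$ are pairwise disjoint by construction; the condition $T/4<a_j<T/2$ then forces every nonempty $S_i$ to have exactly three elements, and the remaining $5m$ agents are fed by $5m$ absorber cells of value $T$ packed into rows $3m-1$ and $3m$ so that the per-agent share is exactly $T$. With that placement the converse is immediate. A secondary flaw in your forward direction: \Cref{lem:extend} does not apply to your partial allocation, since it occupies all $3m$ rows and $3$ columns, so $|M'|+|R'|=3m+3>n$; you would need the classical Latin-rectangle extension theorem (every $3\times 3m$ Latin rectangle extends to a Latin square) instead. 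The paper's padding to $n=6m$ also makes \Cref{lem:extend} directly applicable there.
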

\begin{proof}
   We present a polynomial-time reduction from the $3$-Partition problem, which is known to be NP-complete~\cite{garey1979computers}. In the 3-Partition Problem, we are given $3m$ positive integers $a_1,a_2,\ldots,a_{3m}$ such that $T/4<a_j<T/2$ and $\sum_{j=1}^{3m} a_j=mT$. The goal is to determine whether or not there is a partition $(S_1,\dots,S_m)$ of the set $[3m]$ such that $|S_i|=3$ and $\sum_{j\in S_i}a_j=T$ for each $i\in[m]$.  

From a given instance of the 3-Partition problem, we construct an LSA problem with $N=M=R=[6m]$ and valuations $(v_{ijk})_{i\in N,\,j\in M,\,k\in R}$ as follows:
\begin{align*}
v_{ijk}&=
\begin{cases}
     a_j & \text{if } j=k\le 3m,\\
     T   & \text{if } j=3m-1\text{ and }k\leq 2m+1,\\
     T   & \text{if } j=3m  \text{ and }k\leq 3m-1,\\
     0   & \text{otherwise}
\end{cases}&(i\in N,\ j\in M,\ k\in R).
\end{align*}
Note that the valuations are identical among the agents $N$. 
\begin{table}[htbp]
\caption{The valuation of item $j$ for $k$th round defined in the proof of \Cref{thm:EF}}\label{latinEXPROPEQ}
\centering
\tabcolsep = 1mm
\renewcommand{\noline}[1]{\multicolumn{1}{p{6.8mm}}{\centering\textcolor{blue}{\ssmall #1}}}
\newcommand{\myline}{\hhline{~-------------}}
\begin{tabularx}{\linewidth}{T|c|c|c|c|c|c|c|c|c|c|c|c|c|}
\noline{}& \noline{$1$} & \noline{$2$} & \noline{$3$} & \noline{$\cdots$} & \noline{$2m$} & \noline{$2m{+}1$} & \noline{$2m{+}2$}& \noline{$\cdots$}  & \noline{$3m{-}1$} & \noline{$3m$} & \noline{$3m{+}1$} & \noline{$\cdots$} & \noline{$6m$}\\\myline
$1$      & $a_1$ & $0$ & $0$ & $\cdots$ & $0$ & $0$ & $0$& $\cdots$ & $0$ & $0$ & $0$ & $\cdots$ & $0$\\\myline
$2$      & $0$ & $a_2$ & $0$ & $\cdots$ & $0$ & $0$ & $0$& $\cdots$ & $0$ & $0$ & $0$ & $\cdots$ & $0$\\\myline
$3$      & $0$ & $0$ & $a_3$ & $\cdots$ & $0$ & $0$ & $0$& $\cdots$ & $0$ & $0$ & $0$ & $\cdots$ & $0$\\\myline
$\vdots$ & $\vdots$ & $\vdots$  & $\vdots$ & $\ddots$ & $\vdots$ & $\vdots$ & $\vdots$& & $\vdots$ & $\vdots$ & $\vdots$ & & $\vdots$\\\myline
$2m$   & $0$ & $0$ & $0$ & $\cdots$ & $a_{2m}$ & $0$ & $0$&       & $0$ & $0$ & $0$ & $\cdots$ & $0$\\\myline
$2m+1$ & $0$ & $0$ & $0$ & $\cdots$ & $0$ & $a_{2m+1}$ & $0$&     & $0$ & $0$ & $0$ & $\cdots$ & $0$\\\myline
$2m+2$ & $0$ & $0$ & $0$ & $\cdots$ & $0$ & $0$ & $a_{2m+2}$&     & $0$ & $0$ & $0$ & $\cdots$ & $0$\\\myline
$\vdots$ & $\vdots$ & $\vdots$  & $\vdots$ &  & $\vdots$ & $\vdots$ & $\vdots$&$\ddots$ & $\vdots$ & $\vdots$ & $\vdots$ & & $\vdots$\\\myline
$3m-1$ & $T$ & $T$ & $T$  & $\cdots$ & $T$ & $T$ & $0$&$\cdots$  & $a_{3m-1}$ & $0$  & $0$ & $\cdots$ & $0$\\\myline
$3m$   & $T$ & $T$ & $T$  & $\cdots$ & $T$ & $T$ & $T$&$\cdots$  & $T$ & $a_{3m}$    & $0$ & $\cdots$ & $0$\\\myline
$3m+1$ & $0$ & $0$ & $0$  & $\cdots$ & $0$ & $0$ & $0$&$\cdots$  & $0$ & $0$ & $0$ & $\cdots$ & $0$\\\myline
$\vdots$ & $\vdots$ & $\vdots$  & $\vdots$ &  & $\vdots$ & $\vdots$ & $\vdots$& & $\vdots$ & $\vdots$ & $\vdots$ & & $\vdots$\\\myline
$6m$   & $0$ & $0$ & $0$  & $\cdots$ & $0$ & $0$ & $0$&$\cdots$  & $0$ & $0$ & $0$ & $\cdots$ & $0$\\\myline
\end{tabularx}
\end{table}

Suppose that the given instance of the 3-Partition problem is a Yes-instance, i.e., there is a 3-partition $(S_1,S_2,\dots,S_m)$ such that $|S_i|=3$ and $\sum_{a_j\in S_i}a_j=T$ for $i\in [m]$. Then, consider a complete allocation $A\subseteq N\times M\times R$ such that $A(j,j)=i$ if $a_j\in I_i$ for each $i\in[m]$ and $j\in [m]$, $A(3m-1,k)=m+k$ for each $k\in[2m+1]$, and $A(3m,k)=3m+1+k$ for each $k\in[3m-1]$. 
Note that such a complete allocation must exist by Lemma~\ref{lem:extend}.
Then, the complete allocation $A$ is EF, EQ, PROP, EFX, EQX, PROPX, and of at least egalitarian social welfare $T$.

Conversely, let $A\subseteq N\times M\times R$ be a complete allocation that is EF, PROP, EQ, EFX, EQX, PROPX, or has egalitarian social welfare of at least $T$.
Then, $v_i(A_i)=T$ for each agent $i\in N$. 
Let $S_i=\{j\in[3m]\mid A(j,j)=i\}$ for each $i\in [6m]$.
By the conditions $T/4<a_j<T/2$ and $\sum_{j=1}^{3m} a_j=mT$, the set $S_i~(i\in[6m])$ has exactly $3$ elements if it is not empty.
Thus, $\mathcal{S}=\{S_i\mid i\in[6m],\,S_i\ne\emptyset\}$ is a 3-partition of $[3m]$ such that $|S_i|=3$ and $\sum_{j\in S_i}a_j=T$ for each $S_i\in\mathcal{S}$.

Therefore, it is NP-hard to decide whether there exists a complete allocation that is EF, EQ, PROP, EFX, EQX, PROPX, or has egalitarian social welfare of at least $T$.
\end{proof}

It is worth mentioning that we can also prove NP-completeness by checking the existence of a complete allocation satisfying the following weaker variants of EFX, EQX, and PROPX: $v_i(A_i)\ge v_i(A_{i'}\setminus\{(j,k)\})$ for any $(j,k)\in A_{i'}$ with $v_{i'jk}>0$,
$v_i(A_i)\ge v_{i'}(A_{i'})-\min_{(j,k)\in A_{i'}:\,v_{i'jk}>0}v_{i'jk}$, and
$v_i(A_i)\ge v_{i}(M\times R)/n-\min_{(j,k)\in A_{i}:\,v_{ijk}>0}v_{ijk}$.
This can be obtained by modifying the value of item $6m$ to a sufficiently small positive value $\epsilon$ in the proof of \Cref{thm:EF}.



\section{Concluding Remarks and Future Work}

In this paper, we introduced problems of allocating indivisible items under the Latin square constraint. 
This approach is effective for numerous practical problems. As demonstrated in \Cref{subsec:application}, it is particularly useful for scheduling sightseeing activities, school timetabling, and job rotation in organizations.
The method ensures fairness, efficiency, and balance, thereby preventing the overuse or redundancy of any single element within a system.

This study investigated the computational complexity of finding a fair or efficient allocation under the Latin square constraint.
We provided $(1-1/e)$- and $(1-1/e)/4$-approximation algorithms for the partial and complete Umax LSA problems, respectively.
Additionally, we presented FPT algorithms with respect to the order of Latin square and the optimum value for both the partial and complete Umax LSA problems.
Regarding impossibility results, we demonstrated the NP-hardness of the Umax and Emax problems.
Furthermore, we proved NP-hardness for various settings, including checking the existence of an EF, PROP, EQ, EFX, EQX, or PROPX complete allocation.

A straightforward direction for future work is to construct algorithms for Umax with improved approximation ratios. Developing faster FPT algorithms for Umax and Emax also presents an interesting avenue for exploration. Another open question is determining the complexity of checking the existence of an EF1, EQ1, or PROP1 complete allocation. 
When valuations are binary, we can easily check whether the Emax value of a partial LSA problem is at least $1$ or at most $0$ by solving an agent-side perfect matching problem on a bipartite graph $\big(N,M\times R;\{(i,(j,k))\mid v_{ijk}=1\}\big)$. However, this problem remains unresolved for complete allocations since a partial allocation corresponding to a perfect matching may not be completed (see the rightmost partial Latin square example in Introduction).

\section*{Acknowledgment}
This work was partially supported by 
JST ERATO Grant Number JPMJER2301, 
JST PRESTO Grant Number JPMJPR2122, 
JSPS KAKENHI Grant Number JP20K19739,
Value Exchange Engineering, a joint research project between Mercari, Inc.\ and the RIISE, Sakura Science Exchange Program, and SERB MATRICS Grant Number MTR/2021/000474.





\bibliographystyle{plainnat}
\bibliography{sn-bibliography}

\end{document}